

\documentclass{eptcs}

\usepackage{proof,amssymb,stmaryrd}
\usepackage[all]{xy}

\newcommand{\detach}[3]{\hspace*{2mm} \deduce{\line(#1,6){#2}}{\makebox[0cm]
{#3}} \hspace*{2mm}}

\newcommand{\Int}{\mathbf{Int}}
\newcommand{\BiInt}{\mathbf{BiInt}}
\newcommand{\Ktfour}{\mathbf{KtT4}}
\newcommand{\Sfour}{\mathbf{S4}}
\newcommand{\Sfive}{\mathbf{S5}}

\newcommand{\seq}{\vdash}
\newcommand{\lseq}[1]{\vdash_{#1}}

\newcommand{\sneg}{\neg}
\newcommand{\wneg}{\mathop{\backsim}}
\newcommand{\con}{\wedge}
\newcommand{\dis}{\vee}
\newcommand{\imp}{\mathbin{\supset}}
\newcommand{\sub}{\mathbin{\Yleft}}

\newcommand{\from}{\leftarrow}

\newcommand{\hyp}{\mathit{hyp}}
\newcommand{\cut}{\mathit{cut}}
\newcommand{\weak}{\mathit{weak}}
\newcommand{\contr}{\mathit{contr}}
\newcommand{\nodesplit}{\mathit{nodesplit}}
\newcommand{\nodemerge}{\mathit{nodemerge}}

\newcommand{\monot}{\mathit{monot}}
\newcommand{\nest}{\mathit{nest}}
\newcommand{\unnest}{\mathit{unnest}}

\newcommand{\sglt}[1]{\langle #1 \rangle}
\newcommand{\join}[1]{\mathbin{\oplus}_{#1}}

\newcommand{\nodes}[1]{|#1|}
\newcommand{\dom}[1]{|#1|}

\newcommand{\downset}[2]{#1 \mathbin{\downarrow} #2}
\newcommand{\upset}[2]{#1 \mathbin{\uparrow} #2}

\newcommand{\ntol}[2]{\llbracket\,  #1 \, \rrbracket_{#2}}

\newcommand{\IH}[1]{\textrm{IH on $#1$}}

\newcommand{\sequent}[3]{#2\vdash_{#1}#3}
\newcommand{\SKIP}[1]{}

\newcommand{\dsc}{\mathbf{LBiI}}
\newcommand{\nsc}{\mathbf{N}\textbf{-}\mathbf{LBiI}}
\newcommand{\lsc}{\mathbf{L}\textbf{-}\mathbf{LBiI}}

\newcommand{\unL}[1]{\|{#1}\|^L}
\newcommand{\unR}[1]{\|{#1}\|^R}
\newcommand{\unLf}[1]{|{#1}|^L}
\newcommand{\unRf}[1]{|{#1}|^R}

\newcommand{\lton}[2]{\langle\!\langle~#1~\rangle\!\rangle_{#2}}

\newtheorem{theorem}{Theorem}
\newtheorem{lemma}{Lemma}
\newtheorem{proposition}{Proposition}

\newenvironment{proof}{\noindent \textbf{Proof}}{\hfill $\Box$}


\begin{document}
\title{Relating Sequent Calculi for \\ Bi-intuitionistic Propositional Logic}

\def\titlerunning{Relating Sequent Calculi for Bi-intuitionistic Propositional Logic}

\author{Lu\'is Pinto
\institute{Centro de Matem\'atica, Universidade do Minho, \\
Campus de Gualtar, P-4710-057 Braga, Portugal}
\email{luis@math.uminho.pt}
\and Tarmo Uustalu
\institute{Institute of Cybernetics at Tallinn University of Technology, \\
Akadeemia tee 21, EE-12618 Tallinn, Estonia}
\email{tarmo@cs.ioc.ee}
}

\def\authorrunning{L.~Pinto \& T.~Uustalu}

\maketitle

\begin{abstract}
  Bi-intuitionistic logic is the conservative extension of
  intuitionistic logic with a connective dual to implication. It is
  sometimes presented as a symmetric constructive subsystem of
  classical logic.

  In this paper, we compare three sequent calculi for bi-intuitionistic
  propositional logic: (1) a basic standard-style sequent calculus
  that restricts the premises of implication-right and exclusion-left
  inferences to be single-conclusion resp.\ single-assumption and is
  incomplete without the cut rule, (2) the calculus with nested
  sequents by Gor\'{e} et al., where a
  complete class of cuts is encapsulated into special ``unnest'' rules
  and (3)~a cut-free labelled sequent calculus derived from the Kripke
  semantics of the logic. We show that these calculi can be translated
  into each other and discuss the ineliminable cuts of the
  standard-style sequent calculus.
\end{abstract}


\section{Introduction}
\label{sec:intro}

\emph{Classical} logic is a logic of \emph{dualities}, some of which
are made evident with the help of its sequent calculus formulations.
For example, the duality between conjunction and disjunction is
manifest in the traditional sequent calculus rules for these
connectives, or the process of proving a formula $A$, i.e., deriving
the sequent $\seq A$, is dual to the process of refuting $A$, i.e.,
deriving the sequent $A \vdash $. However, if one considers
implication as a ``first-class'' connective and wants to retain
dualities, its dual connective of \emph{exclusion} (or
pseudo-difference, or subtraction) should also have a similar status.
In classical logic, exclusion is definable from other connectives:
$A\sub B$ is equivalent to $A\con\neg B$.  Classical logic with
exclusion in sequent calculus format has been considered by Curien and
Herbelin and Crolard \cite{CurienHerbelin,Crolard}, in connection to
the study of dualities in functional computation, e.g. the duality of
values and continuations or the duality of call-by-value and
call-by-name evaluation strategies.

\emph{Bi-intuitionistic} logic (also known as Heyting-Brouwer logic
and subtractive logic) results from classical logic with exclusion
by taking the implication to be \emph{intuitionistic} and, in order
to preserve duality, making exclusion \emph{dual-intuitionistic}
(hence the word '\emph{bi}'-intuitionistic). Bi-intuitionistic logic
can also be seen as the union of intuitionistic logic (lacking
exclusion) with dual-intuitionistic logic (lacking implication). The
former has the disjunction property while the latter has the dual
conjunction property (if $A\con B$ is refutable, either $A$ is
refutable or $B$ is refutable). In the union, both of these
properties are lost, yet one cannot prove excluded middle (nor
refute contradiction for the dual-intuitionistic weak negation).

Bi-intuitionistic logic first got attention by C.~Rauszer 
\cite{Rauszer74semiboolean,Rauszer74,Rauszer77}, who studied its
algebraic and Kripke semantics, alongside with an Hilbert-style system
and a sequent calculus. A sequent calculus characterization of the
logic is easily obtained by extending the multiple-conclusion
sequent calculus for intuitionistic logic \`a la Dragalin with
exclusion rules dual to the implication rules. But contrary to what is
suggested, e.g., in \cite{Restall}, cut is not fully eliminable in this
calculus. Neither is cut eliminable in the sequent calculus of Rauszer
\cite{Rauszer74} (the proof in the paper is incorrect).

The lack of cut elimination in a sequent calculus is problematic for
proof search, since the subformula property is not guaranteed. In
order to overcome this issue, extended sequent calculi have been
considered. In particular, a calculus of nested sequents was proposed
by Gor\'e, Postniece and Tiu~\cite{GorePostnieceTiu} and a calculus
with labelled formulas was proposed by the
authors~\cite{PintoUustalu}.

In the presence of various sequent calculi for bi-intuitionistic
logic, a natural question to ask is how they relate to each other.
In this paper, we give translations between the standard-style
(=Dragalin-style), nested and labelled sequent calculi for
bi-intuitionistic propositional logic. By analysing such translations,
we are able to identify a complete class of cuts for
the formulation with standard sequents and we are able to read the
proofs resulting from proof search in the nested or labelled sequent
calculus back into the standard-style sequent calculus. Reed and
Pfenning \cite{ReedPfenning} have remarked that relating labelled
intuitionistic derivations with standard unlabelled ones is
``a surprisingly difficult question''.


\section{The syntax and semantics of $\BiInt$}
\label{sec:biint}

We start by defining the syntax and semantics of bi-intuitionistic
propositional logic ($\BiInt$).

The language of $\BiInt$ extends that of intuitionistic propositional
logic ($\Int$), by one connective, exclusion, thus the {\em formulas}
are given by the grammar:
\[
A,B:=p\mid \top\mid \bot\mid A\con B\mid A\dis B\mid A\imp B\mid A\sub B
\]
where $p$ ranges over a denumerable set of {\em propositional
  variables} which give us atoms; the formula $A \sub B$ is the
\emph{exclusion} of $B$ from $A$. We do not take negations as
primitive, but in addition to the intuitionistic (or strong) negation,
there is also a dual-intuitionistic (or weak) negation. The two
negations are definable by $\sneg A:= A\imp \bot$ and $\wneg A:=
\top\sub A$.

The semantics of $\BiInt$ is usually given \`a la Kripke, although one
can also proceed from an algebraic semantics (in terms of
Heyting-Brouwer algebras) and there are further alternatives. The
Kripke semantics is about truth relative to worlds in Kripke
structures that are the same as for $\Int$. A \emph{Kripke structure}
is a triple $K = (W, \leq, I)$ where $W$ is a non-empty set whose
elements we think of as \emph{worlds}, $\leq$ is a preorder
(reflexive-transitive binary relation) on $W$ (the \emph{accessibility
  relation}) and $I$---the \emph{interpretation}---is an assignment of
sets of propositional variables to the worlds, which is monotone
w.r.t.\ $\leq$, i.e., whenever $w \leq w'$, we have $I(w) \subseteq
I(w')$.

\emph{Truth} in Kripke structures is defined as for $\Int$, but covers
also exclusion, interpreted dually to implication as possibility in
the past:
\begin{itemize}
\item $w \models p$ iff $p \in I(w)$;
\item $w \models \top$ always;
$w \models \bot$ never;
\item
$w \models A \con B$ iff $w \models A$ and $w \models B$;
$w \models A \dis B$ iff $w \models A$ or $w \models B$;
\item $w \models A \imp B$ iff, for any $w' \geq w$, $w' \not\models A$ or $w' \models B$;
\item $w \models A \sub B$ iff, for some $w' \leq w$, $w'
  \models A$ and $w' \not\models B$.
\end{itemize}
A formula is called \emph{valid} if it is true in all worlds of all
structures.  It is easy to see that monotonicity extends from atoms to
all formulas thanks to the universal and existential semantics of
implication and exclusion.

It is important for this paper that instead of general Kripke
structures, one may equivalently work with \emph{Kripke trees}.
These are Kripke structures $(W, \leq, I)$ where $W$ is finite and
the preorder $\leq$ arises as the reflexive-transitive closure of
some binary relation $\to$ on $W$, subject to the condition that any
two worlds $w$, $w'$ are related by the
reflexive-transitive-symmetric closure of $\to$ in a unique way
($w'$ is reached from $w$ by exactly one path along ${\to} \cup
{\from}$).

It is also a basic observation that the G\"odel translation of
$\Int$ into the modal logic $\Sfour$ extends to a translation into
the future-past tense logic $\Ktfour$ (cf.~\cite{Lukowski}). As the
semantics of $\Ktfour$ does not enforce monotonicity of
interpretations, atoms must be translated as future necessities or
past possibilities (these are always monotone): $p^\# = \Box p$ (or
$\blacklozenge p$); $\top^\# = \top$; $\bot^\#=\bot$; $(A \con B)^\#
= A^\# \con B^\#$; $(A \dis B)^\# = A^\# \dis B^\#$; $(A \imp B)^\#
= \Box (A^\# \imp B^\#)$; $(A \sub B)^\# = \blacklozenge (A^\# \sub
B^\#)$.


\section{Three sequent calculi for $\BiInt$}

We will now recall three different sequent calculi for $\BiInt$ that
we will later compare to each other.

\subsection{Standard-style sequent calculus $\dsc$}
\label{sec:dsc}

A sequent calculus for $\BiInt$ is most easily obtained by extending
Dragalin's sequent calculus for $\Int$, as has been done by Restall
\cite{Restall} and Crolard \cite{Crolard}.  (Rauszer's
\cite{Rauszer74} original sequent calculus was different.) In
Dragalin's system, sequents are multiple-conclusion, but the $\imp R$
rule is constrained. The extension, which we will now show, imposes
a dual constraint on the $\sub L$ rule.

The \emph{sequents} of our calculus (henceforth referred to as the
standard-style calculus $\dsc$) are pairs $\Gamma \seq \Delta$ where
$\Gamma, \Delta$ (the \emph{antecedent} and \emph{succedent}) are
finite multisets of formulas (we omit braces and denote union by
comma as usual). The inference rules of $\dsc$ are displayed in
Fig.~\ref{fig:dsc}.

\begin{figure}

\[
\hspace*{-5mm}
\begin{array}{cc}
\multicolumn{1}{l}{\textbf{Initial rule and cut (necessary):}}\\[2ex]
\infer[\hyp]{\Gamma, A \seq A, \Delta}{} \qquad \infer[\cut]{\Gamma
\seq \Delta}{
  \Gamma \seq A, \Delta
  &
  \Gamma, A \seq \Delta
}
\\[2ex]
\multicolumn{1}{l}{\textbf{Structural rules:}}
\\[2ex]
\infer[\weak L]{\Gamma, A \seq \Delta}{
  \Gamma \seq \Delta
} \qquad \infer[\weak R]{\Gamma \seq A, \Delta}{
  \Gamma \seq \Delta
} \qquad \infer[\contr L]{\Gamma, A \seq \Delta}{
  \Gamma, A, A \seq \Delta
} \qquad \infer[\contr R]{\Gamma \seq A, \Delta}{
  \Gamma \seq A, A, \Delta
}
\\[2ex]
\multicolumn{1}{l}{\textbf{Logical rules:}}\\[2ex]
\infer[\top L]{\Gamma, \top \seq \Delta}{
  \Gamma \seq \Delta
} \qquad \infer[\top R]{\Gamma \seq \top, \Delta}{} \qquad
\infer[\con L]{\Gamma, A \con B \seq \Delta}{
  \Gamma, A, B \seq \Delta
} \qquad \infer[\con R]{\Gamma \seq A \con B, \Delta}{
  \Gamma \seq A, \Delta
  &
  \Gamma \seq B, \Delta
}
\\[2ex]
\infer[\bot L]{\Gamma, \bot \seq \Delta}{} \qquad \infer[\bot
R]{\Gamma \seq \bot, \Delta}{
  \Gamma \seq \Delta
} \qquad \infer[\dis L]{\Gamma, A \dis B \seq \Delta}{
  \Gamma, A \seq \Delta
  &
  \Gamma, B \seq \Delta
} \qquad \infer[\dis R]{\Gamma \seq A \dis B, \Delta}{
  \Gamma \seq A, B, \Delta
}
\\[2ex]
\infer[\imp L]{\Gamma, A \imp B \seq \Delta}{
  \Gamma, A \imp B \seq A, \Delta
  &
  \Gamma, B \seq \Delta
} \qquad \infer[\imp R]{\Gamma \seq A \imp B, \Delta}{
  \Gamma, A \seq B
}
\\[2ex]
\infer[\sub L]{\Gamma, A \sub B \seq \Delta}{
  A \seq B, \Delta
} \qquad \infer[\sub R]{\Gamma \seq A \sub B, \Delta}{
  \Gamma \seq A, \Delta
  &
  \Gamma, B \seq A \sub B, \Delta
}
\end{array}
\]
\caption{Inference rules of $\dsc$}
\label{fig:dsc}
\end{figure}

Note that $\Delta$ is missing in the premise of the $\imp R$ rule and
dually in the premise of $\sub L$ we do not have the context $\Gamma$.

Regarding structural rules, both in $\dsc$ and the other two sequent
calculi ($\nsc$ and $\lsc$) considered in this paper, we have chosen
to work with formulations oriented at root-first proof search, which
means that, as a general guideline, we want to have our inference
rules ``as invertible as possible''. We have weakening and contraction
built in to the other rules to the degree that $\dsc$ and $\lsc$ are
complete without explicit versions of them.  This requires of course
that the two-premise rules are context-sharing etc. But there are also
more specific consequences. In $\dsc$, we have duplications of the
implication and exclusion formulas in the premises of the $\imp L$ and
$\sub R$ rules.

$\dsc$ is sound and complete for the Kripke semantics of $\BiInt$ for
the following generalization of validity from formulas to sequents. A
sequent $\Gamma \seq \Delta$ is taken to be valid if, for any in
Kripke structure $(W, \leq, I)$ and any world $w$, we have that if all
formulas in $\Gamma$ are true in $w$, then so is some formula in
$\Delta$. This has been proved (for variants of $\dsc$), e.g., by
Restall~\cite{Restall} and Monteiro~\cite{Monteiro}.

However, $\dsc$ is incomplete without cut, as shown by Pinto and
Uustalu in 2003 (private email message from T.~Uustalu to R.~Gor\'e,
13 Sept.\ 2004, quoted in \cite{BuismanGoreTableaux07}).  It suffices
to consider the obviously valid sequent $p \seq q, r \imp ((p \sub q)
\con r)$.  The only possible last inference (other than weakening and
contraction, which are redundant) in a derivation could be
\[
\infer[\imp R]{p \seq q, r \imp ((p \sub q) \con r)}{
  \deduce[?]{p, r \seq (p \sub q) \con r}{
  }
}
\]
but the premise is invalid as the succedent formula $q$ has been
lost. With cut, the sequent can be proved as follows:
\[
\infer[cut]{p \seq q, r \imp ((p \sub q) \con r)}{
  \infer[\sub R]{p \seq q, p \sub q, \ldots}{
    \infer[\hyp]{p \seq q, p, \ldots}{
    }
    &
    \infer[\hyp]{p, q \seq q, p \sub q, \ldots}{
    }
  }
  &
  \infer[\imp R]{p, p \sub q \seq q, r \imp ((p \sub q) \con r)}{
    \infer[\con R]{p, p \sub q, r \seq (p \sub q) \con r}{
      \infer[\hyp]{p, p \sub q, r \seq p \sub q}{
      }
      &
      \infer[\hyp]{p, p \sub q, r \seq r}{
      }
    }
  }
}
\]
Notice that permutation of the cut on the exclusion $p \sub q$ up past
the $\imp R$ inference, for which the cut formula is a side formula,
is not possible. This is one type of cuts that cannot be eliminated;
there are altogether three such types \cite{Monteiro}. This situation
reminds of the naive standard-style sequent calculus for $\Sfive$
where the sequent $p \seq \Box \Diamond p$ cannot be proved without
cut, but can be proved by applying cut to the sequents $p \seq
\Diamond p$ and $\Diamond p \seq \Box \Diamond p$ that are provable
without cut.

In Sec.~\ref{sec:appl}, with the help of the translations
proposed in this paper, we identify a class of cuts complete for
$\dsc$.

\subsection{Nested sequent calculus $\nsc$}
\label{sec:nsc}

Next we introduce a calculus $\nsc$ of nested sequents, which is a
minor variation of the calculus $\textrm{LBiInt}_1$ of Gor\'{e} et
al.~\cite{GorePostnieceTiu}.\footnote{The main difference is that
  $\textrm{LBiInt}_1$ does not build weakening and contraction into
  other rules, as we do in some cases to have a direct match with corresponding rules in the other systems
  considered.}
$\nsc$ is an extension of $\dsc$ where the concept of contexts is
generalized so that, alongside formulas, they can also contain nested
sequents, manipulated by dedicated additional inference rules.

The \emph{sequents} of $\nsc$ (ranged over by $S$) are defined
simultaneously with \emph{contexts} (ranged over by $\Gamma, \Delta$)
by the following grammar:
\[
\begin{array}{rcl}
S &::=& \Gamma \seq \Delta\\
\Gamma, \Delta&::=& \emptyset \mid A,\Gamma \mid S,\Gamma
\end{array}
\]
where contexts, just as in $\dsc$, are quotiented down to multisets
(so identified up to permutations of the member formulas/nested
sequents). Just as commas in antecedents and succedents intuitively
correspond to conjunctions and disjunctions, nested turnstiles should
be understood as structural-level implications and
exclusions.\footnote{In \cite{GorePostnieceTiu}, a nested sequent in
  the antecedent (resp.\ succedent) of a parent sequent (a
  structural-level exclusion resp.\ implication) is written
  $\Gamma<\Delta$ (resp.\ $\Gamma>\Delta$).}

The inference rules of $\nsc$ are those of $\dsc$ in
Fig.~\ref{fig:dsc} (including the cut rule and the structural rules)
together with additional inference rules for introducing and
eliminating nested sequents. These additional rules appear in
Fig.~\ref{fig:nsc}. The $\nest L$ and $\nest R$ rules are structural
versions of $\sub L$ and $\imp R$. The $\unnest L/R$ rules are
\emph{elimination} rules for exclusions on the left and implications
on the right. It is fair to think of them as masqueraded versions of
certain rather specific types of cuts (we come to this in
Sec.~\ref{sec:appl}).%
\footnote{For the sake of simplicity of presentation, we have opted
  for a formulation of $\unnest L/R$ rules that does not incorporate
  formula contraction. For a version of $\nsc$ that is complete
  without the $\contr L/R$ rules, the $\unnest L/R$ rules must be
  stated differently.}

\begin{figure}[t]

\[
\hspace*{-5mm}
\begin{array}{ccccc}
\multicolumn{2}{l}{\textbf{Rules for nested sequents:}}\\[2ex]
\infer[\nest L]{\Gamma, (\Gamma_0 \seq \Delta_0) \seq
\Delta}{
  \Gamma_0 \seq \Delta_0, \Delta
} &\quad& \infer[\nest R]{\Gamma \seq (\Gamma_0\seq\Delta_0),
\Delta}{
  \Gamma, \Gamma_0 \seq \Delta_0
}
\\[2ex]
\infer[\unnest L]{\Gamma, \Gamma_0 \seq \Delta_0,\Delta}{
  \Gamma, (\Gamma_0 \seq \Delta_0) \seq \Delta
} && \infer[\unnest R]{\Gamma,\Gamma_0\seq\Delta_0, \Delta}{
  \Gamma \seq (\Gamma_0\seq\Delta_0), \Delta
}
\end{array}
\]
\caption{Inference rules of $\nsc$ for manipulating nested sequents;
differently from $\dsc$, cut is redundant}
\label{fig:nsc}
\end{figure}

Stating soundness and completeness of $\nsc$ requires defining what
it means for a nested sequent to be valid.  This is achieved via a
translation that ``flattens'' nested sequents into standard
sequents, reducing validity of nested sequents to that of standard
sequents. We give a formal definition of this translation of
sequents in Sec.~\ref{sec:nsc2dsc}, where we show that derivations
of $\nsc$ can be translated into $\dsc$.  Gor\'e et
al.~\cite{GorePostnieceTiu} established soundness of $\nsc$ wrt.\
this notion of validity directly, but showed completeness by an
embedding of Rauszer's sequent calculus \cite{Rauszer74}.

They also showed cut to be redundant in the strong sense of existence
of a cut-eliminating transformation of derivations
\cite{GorePostnieceTiu}. The example of the previous section is proved
in $\nsc$ without cut (but with $\unnest L$) as follows:
\[
\infer[\unnest L]{p \seq q, r \imp ((p \sub q) \con r)}{
  \infer[\imp R]{(p \seq q) \seq r \imp ((p \sub q) \con r)}{
    \infer[\con R]{(p \seq q), r \seq (p \sub q) \con r}{
      \infer[\nest L]{(p \seq q), r \seq p \sub q}{
        \infer[\sub R]{p \seq q, p \sub q}{
          \infer[\hyp]{p \seq q, p}{
          }
          &
          \infer[\hyp]{p, q \seq q, p \sub q}{
          }
        }
      }
      &
      \infer[\hyp]{(p \seq q), r \seq r}{
      }
    }
  }
}
\]

\subsection{Labelled sequent calculus $\lsc$}
\label{sec:lsc}

The third sequent calculus we consider in this paper is a labelled
sequent calculus $\lsc$, a variation on the calculus L of ours
\cite{PintoUustalu}.%
\footnote{In fact, $\lsc$ lies between the calculi L and L$^*$ of
  \cite{PintoUustalu}. Similarly to L$^*$, the sequent calculus here
  is a calculus of finite Kripke trees rather than general Kripke
  structures, so we reason in terms of adjacency $\to$ rather than the
  induced accessibility relation ${\leq} = {\to}^*$. Differently from
  L, there are no reflexivity and transitivity rules, the monotonicity
  rules propagate truth/falsity to adjacent labels only (but can, of
  course, be applied multiple times), and the $\imp L$ and $\sub R$
  rules analyse and duplicate the main formula locally.}
The design of $\lsc$ follows the method of S.~Negri~\cite{Negri} for
obtaining cut-free sequent calculi for normal modal logics defined by
frame conditions of a certain type.  Essentially, $\lsc$ is a
formalization of the first-order theory of the Kripke semantics of
$\BiInt$, using an explicit device of labels for worlds.

A \emph{sequent} of $\lsc$ is a triple $\Gamma \lseq{G} \Delta$ where
$G$ is a label tree and $\Gamma$ and $\Delta$ are labelled contexts.
More precisely, the \emph{label tree} $G = (N, E)$ is a directed graph
that has its set of nodes $N$ (called labels) nonempty and finite and
is an \emph{undirected} tree in the sense that any nodes $x$, $y$ are
in the relation $(E \cup E^{-1})^*$ in only one way, i.e., are
connected by a single path of forward and backward arcs.  We write
$\nodes{G}$ for $N$ and $xGy$ for $(x, y) \in E$. The \emph{labelled
  contexts} $\Gamma$ and $\Delta$ are multisets of labelled formulas
and these, in their turn, are pairs $x : A$ with $x$ a label drawn
from $\nodes{G}$ and $A$ a formula.

Any label tree can be built (generally in many ways) from
constructions $\sglt{x}$ for the tree with a single node $x$, $(x,
y)$ for the tree with two nodes $x, y$ and one arc $xGy$, and $G
\join x G'$ for the join of two trees at $x$. In the last
construction, we require (as a welldefinedness condition) that the
trees $G$ and $G'$ satisfy $\dom{G} \cap \dom{G'} = \{x\}$, which
guarantees that the joint graph really is a tree.  Conversely, any
directed graph built in terms of these constructions is necessarily
a tree.

Intuitively, we use label trees to represent Kripke trees and a
labelled formula is about truth at a particular world.

The inference rules of $\lsc$ are presented in Fig.~\ref{fig:lsc}.
Some of them have provisos, that we also write as rule premises. The
conditions $\downset{G}{y}$ resp.\ $\upset{G}{y}$ mean that there is
no $z$ such that $zGy$ resp.\ $yGz$. The wellformedness condition of
any $\join{}$ expression occurrence must also be read as a proviso.
In the rules $\imp R$ and $\sub L$, we have freshness conditions on
$y$. Note the presence of the monotonicity rules intuitively
accounting for propagation of truth (resp.\ falsity) to future
(resp.\ past) worlds. The $\nodesplit$ and $\nodemerge$ structural
rules are auxiliary and redundant to the degree of existence of
eliminating transformations (alongside $\weak$ and $\contr$); we
included them here, because they come handy in the translation of
Sec.~\ref{sec:nsc2lsc}. $\nodesplit L/R$ split a node into a pair of
nodes connected by an arc, so that no paths are lost. $\nodemerge
L/R$ merge two nodes connected by an arc.

\begin{figure}
\[
\begin{array}{c}
\multicolumn{1}{l}{\textbf{Initial rule and cut (redundant):}}\\[2ex]
\infer[\hyp]{\Gamma,x : A \lseq{G} x : A, \Delta}{} \qquad
\infer[\cut]{\Gamma \lseq{G} \Delta}{
  \Gamma \lseq{G} x : A, \Delta
  &
  \Gamma, x : A \lseq{G} \Delta
}
\\[2ex]
\multicolumn{1}{l}{\textbf{Structural rules:}}\\[2ex]
\infer[\weak L]{\Gamma, x : A \lseq{G} \Delta}{
  \Gamma \lseq{G} \Delta
} \qquad \infer[\weak R]{\Gamma \lseq{G} x : A, \Delta}{
  \Gamma \lseq{G} \Delta
}
\\[2ex]
\infer[\contr L]{\Gamma, x : A \lseq{G} \Delta}{
  \Gamma, x : A, x : A \lseq{G} \Delta
} \qquad \infer[\contr R]{\Gamma \lseq{G} x : A, \Delta}{
  \Gamma \lseq{G} x : A, x : A, \Delta
}
\\[2ex]
\infer[\nodesplit U]{\Gamma \lseq{G_0 \join y (y, x) \join x G} \Delta}{
      \downset{G}{x} & \Gamma \lseq{G_0 \join y G[y/x]} \Delta}
\qquad \infer[\nodesplit D]{\Gamma \lseq{G \join x (x, y) \join y
G_0} \Delta}{
      \upset{G}{x} & \Gamma \lseq{G[y/x] \join y G_0} \Delta}
\\[2ex]
\infer[\nodemerge D]{\Gamma[x/y] \lseq{G_0[x/y] \join x G}
\Delta[x/y]}
  {\Gamma \lseq{G_0 \join y (y,x) \join x G} \Delta}
\qquad \infer[\nodemerge U]{\Gamma[x/y] \lseq{G \join x G_0[x/y]}
\Delta[x/y]}
  {\Gamma \lseq{G \join x (x,y) \join y G_0} \Delta}
\\[2ex]
\multicolumn{1}{l}{\textbf{Monotonicity rules:}}\\[2ex]
\infer[\monot
L]{\sequent{G}{\Gamma,x:A}{\Delta}}{xGy&\sequent{G}{\Gamma,x:A,y:A}{\Delta}}
\qquad \infer[\monot
R]{\sequent{G}{\Gamma}{x:A,\Delta}}{yGx&\sequent{G}{\Gamma}{y:A,x:A,\Delta}}
\\[2ex]
\multicolumn{1}{l}{\textbf{Logical rules:}}\\[2ex]
\infer[\top L]{ \sequent{G}{\Gamma, x : \top}{\Delta}}{
  \sequent{G}{\Gamma}{\Delta}
} \qquad \infer[\top R]{\sequent{G}{\Gamma}{ x : \top, \Delta}}{ }
\\[2ex]
\infer[\con L] {\sequent{G}{\Gamma, x : A \con B}{ \Delta}}{
\sequent{G}{ \Gamma, x : A, x : B}{ \Delta }} \qquad \infer[\con
R]{\sequent{G}{\Gamma}{x : A \con B, \Delta}}{
  \sequent{G}{\Gamma}{x : A, \Delta}
  &
  \sequent{G}{\Gamma}{x : B, \Delta}
}
\\[2ex]
\infer[\bot L]{\sequent{G}{\Gamma, x: \bot}{\Delta}}{ } \qquad
\infer[\bot R]{\sequent{G}{\Gamma}{x: \bot, \Delta}}{
  \sequent{G}{\Gamma}{\Delta}
}
\\[2ex]
\infer[\dis L]{\sequent{G}{\Gamma, x : A \dis B}{\Delta}}{
  \sequent{G}{\Gamma, x : A}{ \Delta}
  &
  \sequent{G}{\Gamma, x : B}{ \Delta}
} \qquad \infer[\dis R]{\sequent{G}{\Gamma}{x : A \dis B, \Delta}}{
  \sequent{G}{\Gamma}{ x : A, x : B, \Delta}
}
\\[2ex]
\infer[\imp L]{\sequent{G}{\Gamma, x : A \imp B}{\Delta}}{
  \sequent{G}{\Gamma, x : A \imp B}{x : A, \Delta}
  &
  \sequent{G}{\Gamma, x: B }{  \Delta} }
\qquad \infer[\imp R]{\sequent{G}{\Gamma}{x : A \imp B, \Delta}}{
  \sequent{G \join x (x,y)}{\Gamma, y : A}{y : B, \Delta}
}
\\[2ex]
\infer[\sub L]{\sequent{G}{\Gamma, x : A \sub B}{\Delta}}{
  \sequent{(y,x) \join x G}{\Gamma, y: A}{y : B, \Delta}
} \qquad \infer[\sub R]{\sequent{G}{\Gamma}{x : A \sub B, \Delta}}{
  \sequent{G}{\Gamma}{ x: A, \Delta}
  &
  \sequent{G}{\Gamma, x:B }{x : A \sub B, \Delta}
}
\end{array}
\]
\caption{Inference rules of $\lsc$}\label{fig:lsc}
\end{figure}

A labelled sequent $\Gamma \lseq{G} \Delta$ is \emph{valid} if, for
any Kripke structure $(W, \leq, I)$ and function $v : \nodes{G} \to W$
such that $xGy$ implies $v(x) \leq v(y)$, we have that, if $v(x)
\models A$ for every $x : A$ in $\Gamma$, then $v(x) \models A$ for
some $x : A$ in $\Delta$. $\lsc$ is sound and complete wrt.\ this
notion of validity.

$\lsc$ is complete without cut (as we proved by a semantic argument
in \cite{PintoUustalu}) and one should also be able to give a
cut-eliminating transformation of derivations.

Our counterexample to cut elimination in $\dsc$ is proved in $\lsc$ as
follows:
\[
\infer[\imp R]{x : p \lseq{\sglt{x}} x : q, x : r \imp ((p \sub q) \con r)}{
  \infer[\con R]{x : p, y : r \lseq{(x,y)} x : q, y : (p \sub q) \con r}{
     \infer[\monot R]{x : p, \ldots \lseq{(x,y)} x : q, y : p \sub q}{
      \infer[\sub R]{x : p, \ldots \lseq{(x,y)} x : q, x : p \sub q, \ldots}{
        \infer[\hyp]{x : p, \ldots \lseq{(x,y)} x : p, \ldots}{
        }
        &
        \infer[\hyp]{x : q, \ldots \lseq{(x,y)} x : q, \ldots}{
        }
      }
    }
    &
    \infer[\hyp]{x : p, y : r \lseq{(x,y)} x : q, y : r}{
    }
  }
}
\]
\normalsize
Notice the downward information propagation by the $\monot R$ inference
to an already existing label.


\section{Translations}

In this section, we study syntactic embeddings between the three
calculi.

We present six translations in all possible directions according to
the following plan (the sixth translation we will only sketch).
\[
\xymatrix{
 & \dsc \ar@/_5pt/[dr]
    \ar@/_5pt/[dl]_{\textrm{Sec.~\ref{sec:nsc2dsc}}}
   &
\\
\nsc \ar@/_5pt/[ur]
   \ar@/_5pt/[rr]_{\textrm{Sec.~\ref{sec:nsc2lsc}}}
 & &
   \lsc \ar@/_5pt/[ll]_{\textrm{Sec.~\ref{sec:lsc2nsc}}}
    \ar@{.>}@/_5pt/[ul]_{\textrm{Sec.~\ref{sec:dsc2lsc}}}
}
\]

\subsection{From $\nsc$ to $\dsc$ and back}
\label{sec:nsc2dsc}

As sequents and rules of $\dsc$ are also sequents and rules of $\nsc$,
a derivation in $\dsc$ is also a derivation in $\nsc$. Note, however,
that a cut in $\dsc$ is rendered by a cut also in $\nsc$. This is an
issue and we will reconsider it in section Sec.~\ref{sec:appl}.

For now, we move on to the converse direction.

We define simultaneously two functions on nested contexts
$\unLf{(-)}$ and $\unRf{(-)}$ that produce formulas. They are meant
to be applied to antecedents and succedents of sequents.  We also
introduce two further functions $\unL{(-)}$ and $\unR{(-)}$, defined
in terms of $\unLf{(-)}$ and $\unRf{(-)}$, to produce standard
contexts instead of formulas. They are used to translate top-level
sequents and avoid unnecessary rewriting of commas as $\con$ or
$\dis$.

\[
\begin{array}{rclcrcl}
\unLf{\emptyset}&=&\top&\quad&\unRf{\emptyset}&=&\bot\\
\unLf{A,\Gamma}&=&A\con\unLf{\Gamma}&\quad&\unRf{A,\Gamma}&=&A\dis\unRf{\Gamma}\\
\unLf{(\Gamma_0\vdash\Delta_0),\Gamma}&=&(\unLf{\Gamma_0}\sub\unRf{\Delta_0})\con\unLf{\Gamma}&\quad&\unRf{(\Gamma_0\vdash\Delta_0),\Gamma}&=&(\unLf{\Gamma_0}\imp\unRf{\Delta_0})\dis\unRf{\Gamma}\\[2ex]
\unL{\emptyset}&=&\emptyset&\quad&\unR{\emptyset}&=&\emptyset\\
\unL{A,\Gamma}&=&A,\unL{\Gamma}&\quad&\unR{A,\Gamma}&=&A,\unR{\Gamma}\\
\unL{(\Gamma_0\vdash\Delta_0),\Gamma}&=&(\unLf{\Gamma_0}\sub\unRf{\Delta_0}),\unL{\Gamma}&\quad&\unR{(\Gamma_0\vdash\Delta_0),\Gamma}&=&(\unLf{\Gamma_0}\imp\unRf{\Delta_0}),\unR{\Gamma}\\
\end{array}
\]

\begin{theorem}
  If $\Gamma \seq \Delta$ is derivable in $\nsc$, then $\unL{\Gamma} \seq
  \unR\Delta$ is derivable in $\dsc$.
\end{theorem}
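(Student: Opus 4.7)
The plan is to proceed by induction on the $\nsc$ derivation of $\Gamma \seq \Delta$. Before the main induction I would establish four preparatory lemmas about the translation, each by a routine structural induction on a nested context:
(L1) $\unL{\Gamma} \seq \unLf{\Gamma}$ is derivable in $\dsc$, built by $\con R$, $\top R$, and $\hyp$;
(R1) dually, $\unRf{\Delta} \seq \unR{\Delta}$ is derivable, built by $\dis L$ and $\bot L$;
(L2) if $\Pi, \unL{\Gamma} \seq \Sigma$ is derivable in $\dsc$ then so is $\Pi, \unLf{\Gamma} \seq \Sigma$, obtained by folding the multiset into the iterated conjunction via $\top L$ and $\con L$; and (R2) its dual, using $\bot R$ and $\dis R$.

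The main induction treats the cases for $\hyp$, $\cut$, the structural rules, and the logical rules for $\top, \bot, \con, \dis, \imp, \sub$ verbatim: each such rule acts on a single active formula sitting in a multiset, and the translations $\unL{-}$, $\unR{-}$ preserve both the active formula and the shape of the surrounding multiset (they only rewrite any nested-sequent members), so the same rule applies directly in $\dsc$ after translation. The only nontrivial cases are the four rules manipulating nested sequents. For $\nest R$, the IH yields $\unL{\Gamma}, \unL{\Gamma_0} \seq \unR{\Delta_0}$; applying (L2) and then (R2) rewrites this as $\unL{\Gamma}, \unLf{\Gamma_0} \seq \unRf{\Delta_0}$, and then a single $\imp R$ followed by $\weak R$ yields the target $\unL{\Gamma} \seq \unLf{\Gamma_0} \imp \unRf{\Delta_0}, \unR{\Delta}$. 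The case of $\nest L$ is symmetric, closing with one $\sub L$ inference.

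The most interesting cases are $\unnest L$ and $\unnest R$, which must be simulated by cuts in $\dsc$. For $\unnest L$, the IH supplies $\unL{\Gamma}, \unLf{\Gamma_0} \sub \unRf{\Delta_0} \seq \unR{\Delta}$ and the goal is $\unL{\Gamma}, \unL{\Gamma_0} \seq \unR{\Delta_0}, \unR{\Delta}$. I would $\cut$ on the formula $\unLf{\Gamma_0} \sub \unRf{\Delta_0}$: the right cut-premise is the IH sequent plus weakenings, while the left cut-premise $\unL{\Gamma}, \unL{\Gamma_0} \seq \unLf{\Gamma_0} \sub \unRf{\Delta_0}, \unR{\Delta_0}, \unR{\Delta}$ is derived by a single $\sub R$ inference whose two subpremises follow from (L1) and (R1) together with weakenings. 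The case of $\unnest R$ is dual, via a cut on $\unLf{\Gamma_0} \imp \unRf{\Delta_0}$ whose right premise is discharged by an $\imp L$ inference resting on (L1) and (R1).

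The main obstacle is the bookkeeping caused by the mismatch between the multiset translations $\unL{-}, \unR{-}$ and their packed formula counterparts $\unLf{-}, \unRf{-}$: this is what forces the conversion lemmas, and because the $\dsc$ rules $\imp R$ and $\sub L$ each drop a side-context, extra weakenings must be inserted at the right points to reinstate it. Once these details are handled, each $\nsc$ rule simulates cleanly by at most one logical inference, one or two applications of a conversion lemma, and, in the $\unnest$ cases, a single cut. It is worth noting that the cuts used to simulate $\unnest L/R$ are always on formulas of the form $\unLf{\Gamma_0} \sub \unRf{\Delta_0}$ or $\unLf{\Gamma_0} \imp \unRf{\Delta_0}$ arising from embedded nested sequents, which previews the identification in Sec.~\ref{sec:appl} of a complete class of cuts for $\dsc$.
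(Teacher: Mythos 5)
Your proposal is correct and follows essentially the same route as the paper's proof: the non-nesting rules translate verbatim, $\nest L/R$ become $\sub L$/$\imp R$ preceded by the multiset-to-formula conversions (which the paper inlines as $(\con L, \dis R)^*$ steps rather than isolating as your lemmas (L1)--(R2)), and $\unnest L/R$ are simulated by cuts on $\unLf{\Gamma_0}\sub\unRf{\Delta_0}$ resp.\ $\unLf{\Gamma_0}\imp\unRf{\Delta_0}$ whose non-IH premise is closed by a single $\sub R$/$\imp L$ over near-identity sequents, exactly as in the paper. The only cosmetic difference is your extra $\weak R$ after $\imp R$ in the $\nest R$ case, which is unnecessary since the $\dsc$ rule $\imp R$ already reinstates $\Delta$ in its conclusion.
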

\begin{proof}
  The proof is by induction on the structure of the $\nsc$ derivation
  of $\Gamma \seq \Delta$. The cases corresponding to rules other than
  the nesting rules are immediate, since there is a directly matching
  rule in $\dsc$.

Case $\nest R$: The given derivation has the form
\[
\infer[\nest R]{\Gamma \seq  (\Gamma_0 \seq \Delta_0),  \Delta}{
  \infer*[\pi]{\Gamma, \Gamma_0 \seq \Delta_0} {
  }
}
\]
It can be mapped to
\[
  \infer[\imp R]{\unL\Gamma \seq \unLf{\Gamma_0} \imp \unRf{\Delta_0}, \unR\Delta}{
    \infer=[(\con L, \dis R)^*]{\unL\Gamma,  \unLf{\Gamma_0} \seq \unRf{\Delta_0}}{
      \infer*[\IH{\pi}]{\unL\Gamma,  \unL{\Gamma_0} \seq \unR{\Delta_0}}{
      }
    }
  }
\]

Case $\unnest L$: The given derivation is of the form
\[
\hspace*{-5mm}
\infer[\unnest L] {\Gamma, \Gamma_0 \seq \Delta_0,\Delta} {
  \infer*[\pi]{\Gamma, (\Gamma_0 \seq \Delta_0) \seq \Delta} {
  }
}
\]
and we can transform it to
\[
\hspace*{-16mm} 
\infer[cut] {\unL{\Gamma}, \unL{\Gamma_0} \seq
\unR{\Delta_0},\unR{\Delta}} {
  \infer[\sub R]{\unL{\Gamma}, \unL{\Gamma_0} \seq \unLf{\Gamma_0}\sub\unRf{\Delta_0}, \unR{\Delta_0}, \unR{\Delta}}
{
    \infer=[(\con R)^*]{\ldots, \unL{\Gamma_0}\seq \unLf{\Gamma_0}, \ldots}{
      \forall i.\, \infer[\hyp]{\ldots, \unL{\Gamma_0}\seq \unL{\Gamma_0}_i, \ldots}{
      }
    }
    &
    \infer=[(\dis L)^*]{\ldots, \unRf{\Delta_0} \seq \unLf{\Gamma_0}\sub\unRf{\Delta_0},\unR{\Delta_0}, \ldots}{
      \forall i.\, \infer[\hyp]{\ldots, \unR{\Delta_0}_i \seq \unLf{\Gamma_0}\sub\unRf{\Delta_0},\unR{\Delta_0}, \ldots}{
      }
    }
  }
  &
  \hspace{-2mm}
  \detach{1}{10}{
  \hspace{-5.2cm}
  \infer=[(\weak L/R)^*]{\unL{\Gamma}, \unL{\Gamma_0}, \unLf{\Gamma_0}\sub\unRf{\Delta_0} \seq \unR{\Delta_0}, \unR{\Delta}}{
    \infer*[\IH{\pi}]{\unL{\Gamma}, \unLf{\Gamma_0}\sub\unRf{\Delta_0} \seq \unR{\Delta}}{
    }
  }
  }
}
\]

\end{proof}

\subsection{From $\lsc$ to $\nsc$}
\label{sec:lsc2nsc}

The translation from $\lsc$ to $\nsc$ is more involved than those of
the previous section, but also more illuminating.

The translation of a labelled sequent into a nested sequent follows
the idea that we can view any label of the label tree as its root
(intuitively, the focus of attention) and produce a nesting
structure for a nested sequent by mimicking this rooted tree.

The translation of a labeled sequent wrt.\ a chosen label from its
label tree is defined by recursion on the rooted tree structure by
\[
\begin{array}{rcl}
\lton{\Gamma \lseq{\sglt{x}} \Delta}{x}
& = & \Gamma(x) \seq \Delta(x) \\[1.2ex]
\lton{\Gamma \lseq{G \join{x} (x,y) \join{y} G_0} \Delta}{x}
& = & \Lambda \seq (\Lambda_0 \seq \Pi_0), \Pi \\
\multicolumn{3}{r}{\quad \textrm{where~} \Lambda \seq \Pi =
\lton{\Gamma[G] \lseq{G} \Delta[G]}{x}
   \textrm{~and~} \Lambda_0 \seq \Pi_0 = \lton{\Gamma[G_0] \lseq{G_0} \Delta[G_0]}{y}} \\[1.2ex]
\lton{\Gamma \lseq{G_0 \join{y} (y,x) \join{x} G} \Delta}{x}
& = & \Lambda, (\Lambda_0 \seq \Pi_0) \seq \Pi \\
\multicolumn{3}{r}{\quad \textrm{where~} \Lambda \seq \Pi =
\lton{\Gamma[G] \lseq{G} \Delta[G]}{x}
   \textrm{~and~} \Lambda_0 \seq \Pi_0 = \lton{\Gamma[G_0] \lseq{G_0} \Delta[G_0]}{y}}
\end{array}
\]
where $\Gamma(x) = \{ A \mid x : A \in \Gamma \}$ and $\Gamma[G] = \{
x : A \mid x \in \nodes{G} \textrm{~and~} x : A \in \Gamma \}$.

Intuitively, the formulas labelled with $x$ in the given sequent are
kept where they are, whereas those with labels reachable through the
labels immediately below resp.\ above $x$ are arranged into nested
sequent members of the antecedent resp.\ succedent of the top-level
nested sequent produced.

\begin{lemma}[Readdressing] For any $z, x \in \nodes{G}$, if
  $\lton{\Gamma \lseq{G} \Delta}{z}$ is derivable in $\nsc$, then so is $\lton{\Gamma \lseq{G}
    \Delta}{x}$.
\end{lemma}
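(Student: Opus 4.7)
The plan is to prove the lemma by induction on the length of the unique undirected path between $z$ and $x$ in $G$. The base case $z = x$ is trivial, and the inductive step reduces the task to the case when $z$ and $x$ are adjacent. For the adjacent case, suppose $zGx$ (the case $xGz$ is dual). Then $G$ decomposes as $H \join{z} (z,x) \join{x} H_0$ and, writing $\Lambda \seq \Pi = \lton{\Gamma[H] \lseq{H} \Delta[H]}{z}$ and $\Lambda_0 \seq \Pi_0 = \lton{\Gamma[H_0] \lseq{H_0} \Delta[H_0]}{x}$, the source and target are
\[
\Lambda \seq (\Lambda_0 \seq \Pi_0), \Pi \qquad \text{and} \qquad \Lambda_0, (\Lambda \seq \Pi) \seq \Pi_0
\]
respectively.

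The cleanest way to produce the target derivation from the source is semantic: by soundness of $\nsc$ (from~\cite{GorePostnieceTiu}), the source is valid, and the two nested sequents flatten via $\unLf{-}, \unRf{-}$ to $\BiInt$-formula sequents that are logically equivalent. This equivalence follows from the two $\BiInt$ residuations $A \sub B \seq C$ iff $A \seq B \dis C$ and $A \seq B \imp C$ iff $A \con B \seq C$, applied at the separating arc $(z,x)$: they relate the source flattening $\unLf{\Lambda} \seq (\unLf{\Lambda_0} \imp \unRf{\Pi_0}) \dis \unRf{\Pi}$ to the target flattening $\unLf{\Lambda_0} \con (\unLf{\Lambda} \sub \unRf{\Pi}) \seq \unRf{\Pi_0}$. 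Hence the target is valid, and completeness of $\nsc$ (also from~\cite{GorePostnieceTiu}) yields a derivation.

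The main obstacle is the bookkeeping justifying the semantic equivalence at the separating arc, which requires carefully relating the Kripke meaning of a nested sequent (under the $\unLf{-}, \unRf{-}$ flattening) to the Kripke meaning of the underlying labelled sequent. A purely syntactic alternative would pipe the given derivation through the formula encodings of the nested sequents: using cut on $\unLf{\Lambda_0} \imp \unRf{\Pi_0}$ to replace the succedent nesting with its formula form, carrying out the residuation moves internally in $\dsc$ (via $\imp R$, $\sub R$, $\sub L$, and the like), and finally using cut on $\unLf{\Lambda} \sub \unRf{\Pi}$ to re-introduce the antecedent nesting. This alternative is noticeably heavier and essentially retraces the semantic residuations inside the calculus, so the semantic route is preferable.
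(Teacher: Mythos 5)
Your induction---on the length of the undirected path from $z$ to $x$, reducing everything to the case where the two labels are adjacent---is exactly the paper's, and so is your identification of the source and target nested sequents $\Lambda \seq (\Lambda_0 \seq \Pi_0), \Pi$ and $\Lambda_0, (\Lambda \seq \Pi) \seq \Pi_0$. Where you genuinely diverge is in how the adjacent case is discharged. The paper does it with two inferences of $\nsc$ and no semantics: in your orientation ($zGx$), from the source $\Lambda \seq (\Lambda_0 \seq \Pi_0), \Pi$ one application of $\nest L$ yields $(\Lambda \seq \Pi) \seq (\Lambda_0 \seq \Pi_0)$, and one application of $\unnest R$ then yields $\Lambda_0, (\Lambda \seq \Pi) \seq \Pi_0$; the opposite orientation is dual, using $\nest R$ followed by $\unnest L$. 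The point you are circling around but not quite landing on is that the two residuations you invoke semantically are already built into $\nsc$ as structural rules: $\nest L/R$ and $\unnest L/R$ are precisely the two directions of the two adjunctions, performed at the level of nesting rather than of formula encodings, so neither the cut-laden formula translation of your ``syntactic alternative'' nor any appeal to soundness and completeness is needed, and the ``bookkeeping obstacle'' you worry about simply does not arise. Your semantic route does prove the statement, granting the cited soundness and completeness of $\nsc$ and the residuation laws, but at a real cost in this paper's setting: it only asserts the existence of a derivation rather than constructing one from the given one, so the resulting translation from $\lsc$ to $\nsc$ would no longer be effective; and once completeness of $\nsc$ is on the table, the entire translation theorem this lemma serves would already follow from soundness of $\lsc$, which rather defeats the purpose of a syntactic translation. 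The paper's two-rule derivation keeps the lemma constructive and cut-free.
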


\begin{proof} By induction on the unique path along $G\cup G^{-1}$
  from $x$ to $z$. The base case $x = z$ is trivial.

  We consider one of the two symmetric step cases, namely the one
  where $xGy$. In this case we have $G = G' \join{x} (x,y) \join{y} G_0$,
  with the path from $y$ to $z$ lying in $G_0$.

The given derivation is
\[
\infer*[\pi]{\lton{\Gamma \lseq{G} \Delta}{z}}{
}
\]
The nested sequent $\lton{\Gamma \lseq{G} \Delta}{x}$ can be derived
by
\[
\infer[\unnest L]{\Lambda \seq (\Lambda_0 \seq \Pi_0), \Pi}{
  \infer[\nest R]{(\Lambda\seq \Pi)\seq(\Lambda_0 \seq \Pi_0)}{
     \infer*[\IH{\pi}]{\Lambda_0, (\Lambda \seq \Pi) \seq \Pi_0}{
     }
  }}
\]
where $\Lambda \seq \Pi = \lton{\Gamma[G'] \lseq{G'} \Delta[G']}{x}$
and $\Lambda_0 \seq \Pi_0 = \lton{\Gamma[G_0] \lseq{G_0}
\Delta[G_0]}{y}$, so that $\lton{\Gamma \lseq{G} \Delta}{x} =
\Lambda \seq (\Lambda_0 \seq \Pi_0), \Pi$ whereas $\lton{\Gamma
\lseq{G} \Delta}{y} = \Lambda_0, (\Lambda \seq \Pi) \seq \Pi_0$.
\end{proof}

\begin{theorem}
  If $\Gamma \lseq{G} \Delta$ is derivable in $\lsc$, then $\lton{\Gamma \lseq{G}
    \Delta}{x}$ is derivable in $\nsc$ for any $x\in \nodes{G}$.
\end{theorem}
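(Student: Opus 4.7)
The plan is to proceed by induction on the $\lsc$ derivation of $\Gamma \lseq{G} \Delta$, using the Readdressing Lemma freely to choose the most convenient root for each translation: by readdressing, it suffices to exhibit a $\nsc$ derivation of $\lton{\Gamma \lseq{G} \Delta}{z}$ for some one $z \in |G|$, and the inductive hypothesis likewise supplies derivations of each premise at every root we wish.

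For the initial rule $\hyp$, for $\cut$, for weakening and contraction, the propositional rules $\top L/R$, $\bot L/R$, $\con L/R$, $\dis L/R$, and the rules $\imp L$ and $\sub R$, the principal (or cut) formula lives at a single existing label $x$. Translating at $x$ puts it at the top level of the resulting nested sequent, and the analogous $\nsc$ rule (whose contexts may now be nested) applies directly to the premises. The structural rules $\nodesplit$ and $\nodemerge$ merely insert or collapse an identity layer of nesting, and are simulated by paired $\nest$ and $\unnest$ applications together with weakenings.

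For $\monot L$ applied with $xGy$ (dually for $\monot R$), I translate at root $y$. The premise becomes $A, \Lambda_y, (A, \Lambda_x \seq \Pi_x) \seq \Pi_y$, where the outer $A$ comes from $y:A$ and the inner $A$ (inside the antecedent-nested sequent encoding the $x$-subtree) from $x:A$; the conclusion differs only in lacking the outer $A$. Since $\nsc$ has no explicit monotonicity rule, I use a cut on $A$:
\[
\infer[\cut]{\Lambda_y, (A, \Lambda_x \seq \Pi_x) \seq \Pi_y}{
  \infer[\nest L]{\Lambda_y, (A, \Lambda_x \seq \Pi_x) \seq A, \Pi_y}{
     A, \Lambda_x \seq \Pi_x, A, \Pi_y
  }
  &
  A, \Lambda_y, (A, \Lambda_x \seq \Pi_x) \seq \Pi_y
}
\]
whose topmost left sequent is derivable by $\hyp$ plus weakenings (the formula $A$ appears on both sides) and whose topmost right sequent is the IH.

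The most illuminating case, and the main obstacle, is $\imp R$ (and dually $\sub L$), because the premise of the $\lsc$ rule employs a fresh label $y$ that does not occur in the conclusion's tree $G$. I invoke the IH at root $y$: the third clause of the definition of $\lton{-}{-}$ then collapses the whole of $G$ into a single antecedent-nested sequent, yielding $A, (\Lambda_\ast \seq \Pi_\ast) \seq B$ with $\Lambda_\ast \seq \Pi_\ast = \lton{\Gamma \lseq{G} \Delta}{x}$. I finish with
\[
\infer[\unnest L]{\Lambda_\ast \seq A \imp B, \Pi_\ast}{
  \infer[\imp R]{(\Lambda_\ast \seq \Pi_\ast) \seq A \imp B}{
    A, (\Lambda_\ast \seq \Pi_\ast) \seq B
  }
}
\]
whose bottom is exactly $\lton{\Gamma \lseq{G} x:A \imp B, \Delta}{x}$; readdressing then lifts derivability to every root of $|G|$. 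The case $\sub L$ is handled symmetrically, concluding with $\sub L$ of $\nsc$ followed by $\unnest R$.
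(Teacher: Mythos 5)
Your proof establishes the theorem as stated, and for the crux cases --- $\imp R$ and $\sub L$, where the premise lives on an enlarged label tree --- it coincides exactly with the paper's argument: readdress to the fresh label $y$, so that the whole old tree collapses into a single nested sequent, apply the formula-level rule of $\nsc$, and finish with $\unnest L$ (resp.\ $\unnest R$). The generic cases are likewise handled as in the paper. Where you genuinely diverge is the monotonicity rules. The paper translates the conclusion of $\monot L$ at the \emph{source} label $x$ and simulates the rule cut-freely by $\nest R$ followed by $\unnest L$ and $\contr L$ (dually for $\monot R$); you translate at the \emph{target} label $y$ and remove the extra copy of $A$ by a $\cut$ whose left premise comes from $\hyp$ via $\nest L$. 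Your cut is sound, and since $\nsc$ officially contains $\cut$ the literal statement is proved; but the cut is avoidable, and avoiding it is the point: Section 5 relies on the fact that this translation introduces cuts in $\nsc$ \emph{only} when translating cuts of $\lsc$, so that redundancy of cut transfers between the two systems --- your version of the $\monot$ cases breaks that transfer. A smaller remark: for $\nodesplit$ and $\nodemerge$ a single $\nest$ resp.\ $\unnest$ at a well-chosen root suffices (rooting at the freshly created, formula-free node turns $\nodesplit$ into exactly one $\nest L$ or $\nest R$, and $\nodemerge$ into one $\unnest$); be careful not to appeal to ``weakening by a nested sequent'', which is not among the rules of $\nsc$ as given --- $\weak L/R$ only insert formulas.
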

\begin{proof}
  By induction on the derivation of $\Gamma \lseq{G} \Delta$ in
  $\lsc$. We show the prototypical cases.

Case $\monot L$: The given derivation is of the form
\[
\hspace*{-5mm}
\infer[\monot L]{\Gamma,x :A \lseq{G \join{x} (x,y) \join{y} G_0} \Delta}{
  \infer*[\pi]{\Gamma, x :A, y:A \lseq{G \join{x} (x,y) \join{y} G_0} \Delta}{
  }
}
\]
By readdressing, it suffices to prove $\lton{\Gamma, x : A \lseq{G
    \join{x} (x,y) \join{y} G_0} \Delta}{x}$.

We construct this derivation:
\[
\infer[\contr L]{\Lambda, A \seq (\Lambda_0 \seq \Pi_0), \Pi}{
  \infer[\unnest L]{\Lambda, A. A \seq (\Lambda_0 \seq \Pi_0), \Pi}{
    \infer[\nest R]{(\Lambda. A \seq \Pi), A \seq (\Lambda_0 \seq \Pi_0)}{
       \infer*[\IH{\pi,y}]{(\Lambda, A \seq \Pi), \Lambda_0, A \seq \Pi_0}{
       }
    }
  }
}
\]
Here, $\Lambda \seq \Pi = \lton{\Gamma[G] \lseq{G} \Delta[G]}{x}$
and $\Lambda_0 \seq \Pi_0 = \lton{\Gamma[G_0] \lseq{G_0}
\Delta[G_0]}{y}$, which gives us $\lton{\Gamma, x : A \lseq{G
\join{x} (x,y) \join{y} G_0}
  \Delta}{x} = \Lambda, A \seq (\Lambda_0 \seq \Pi_0), \Pi$ and
$\lton{\Gamma, x : A, y : A \lseq{G \join{x} (x,y) \join{y} G_0}
  \Delta}{y} = (\Lambda, A \seq \Pi), \Lambda_0, A \seq \Pi_0$.

Case $\imp R$: The given derivation is of the form
\[
\infer[\imp R]{\Gamma \lseq{G} x : A \imp B, \Delta}{
  \infer*[\pi]{\Gamma, y: A \lseq{G \join x (x,y)} y : B, \Delta}{
  }
}
\]

We prove $\lton{\Gamma \lseq{G} x : A \imp B, \Delta}{x}$, which we
know is enough by readdressing. The derivation is this:
\[
\infer[\unnest L]{\Lambda \seq A \imp B, \Pi}{
  \infer[\imp R]{(\Lambda \seq \Pi) \seq A \imp B}{
    \infer*[\IH{\pi, y}]{(\Lambda \seq \Pi), A \seq B}{
    }
  }
}
\]
Here, $\Lambda \seq \Pi = \lton{\Gamma \lseq{G} \Delta}{x}$, which
gives us $\lton{\Gamma \lseq{G} x : A \imp B,
  \Delta}{x} = \Lambda \seq A \imp B, \Pi$ and
$\lton{\Gamma, y : A \lseq{G \join{x} (x,y)}
  y : B, \Delta}{y} = (\Lambda \seq \Pi), A \seq B$.
\end{proof}

\subsection{From $\nsc$ to $\lsc$}
\label{sec:nsc2lsc}

The translation from $\nsc$ to $\lsc$ is intended as an inverse for
that of the previous section. On sequents, it is a true inverse
(translating a sequent from $\nsc$ to $\lsc$ and back, we arrive at
exactly the same sequent; starting with an $\lsc$ sequent, we get an
isomorphic label tree with the same root). On derivations, the
isomorphism should hold up to a suitable notion of equivalence of
derivations on both sides (i.e., in both $\nsc$ and $\lsc$). We will
not pursue this here. But we expect that the right notions of
equivalence would be best formulated and the isomorphism established
with the help of term calculi.

We define a translation of $\nsc$ sequents to $\lsc$ sequents, by
induction on the antecedent and succedent of the given nested
sequent, by the following function, which also takes a label $x$ as
an additional argument. The root of the nesting structure of the
given nested sequent (i.e., its top level) is sent to label $x$ in
the label tree of the labelled sequent.
\[
\begin{array}{rcl}
\ntol{\seq}{x} & = & \lseq{\sglt{x}} \\
\ntol{\seq A, \Delta}{x}
& = & \Lambda \lseq{G} x : A, \Pi \\
&   & \quad \textrm{where~} \Lambda \lseq{G} \Pi = \ntol{\seq \Delta}{x} \\
\ntol{\seq (\Gamma_0 \seq \Delta_0), \Delta}{x}
& = & \Lambda, \Lambda_0 \lseq{G \join x (x,y) \join y G_0} \Pi_0, \Pi \\
&   & \quad \textrm{where~} \Lambda \lseq{G} \Pi = \ntol{\seq \Delta}{x}
      \textrm{~and~} \Lambda_0 \lseq{G_0} \Pi_0 = \ntol{\Gamma_0 \seq \Delta_0}{y} \\
\ntol{\Gamma, A \seq \Delta}{x}
& = & \Lambda, x : A \lseq{G} \Pi \\
&   & \quad \textrm{where~} \Lambda \lseq{G} \Pi = \ntol{\Gamma \seq \Delta}{x} \\
\ntol{\Gamma, (\Gamma_0 \seq \Delta_0) \seq  \Delta}{x}
& = & \Lambda, \Lambda_0 \lseq{G_0 \join y (y,x) \join x G} \Pi_0, \Pi \\
&   & \quad \textrm{where~} \Lambda \lseq{G} \Pi = \ntol{\Gamma \seq \Delta}{x}
      \textrm{~and~} \Lambda_0 \lseq{G_0} \Pi_0 = \ntol{\Gamma_0 \seq \Delta_0}{y}
\end{array}
\]
Intuitively, any formula in the top-level sequent is labelled by $x$
and remains where it is. Any sequent in the antecedent resp.\
succedent of the top-level sequent leads to the creation of a new
label $y$ immediately below resp.\ above $x$. The translated
elements of its antecedent resp.\ succedent are placed in the
antecedent resp.\ succedent of the sequent in the making.

Note that we have given the mathematical definition by first recursing
on the antecedent and then the succedent. In fact, the order is
immaterial, one could just as well start with the antecedent or,
indeed, remove formulas/nested sequents from the antecedent and
succedent in turns, in any order. This commutativity is used
extensively in our translation of derivations.

\begin{theorem}
  If $\Gamma \seq \Delta$ is derivable in $\nsc$, then $\ntol{\Gamma \seq
    \Delta}{x}$ is derivable in $\lsc$ for any $x$.
\end{theorem}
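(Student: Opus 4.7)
The plan is to proceed by structural induction on the derivation of $\Gamma \seq \Delta$ in $\nsc$, constructing for each rule a derivation in $\lsc$ of $\ntol{\Gamma \seq \Delta}{x}$ from the (IH-provided) derivations of the translated premises. For the rules inherited from $\dsc$---the initial rule, $\cut$, the structural rules, and the logical rules for $\top$, $\bot$, $\con$, $\dis$, together with $\imp L$ and $\sub R$---the principal formula and its subformulas lie at the top level of the sequent, and the translation places them all at label $x$. In each case the corresponding $\lsc$ rule applies directly at $x$, so the translation is essentially mechanical.

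The interesting cases are those that interact with the tree structure. For $\imp R$, the $\nsc$ premise $\Gamma, A \seq B$ translates, crucially, to a labelled sequent of the form $\Lambda, x : A \lseq{G_\Gamma} x : B$ in which $G_\Gamma$ has no nodes above $x$ (because the premise has no top-level succedent besides $B$). The target is a sequent $\Lambda \lseq{G} x : A \imp B, \Pi$, where $G$ extends $G_\Gamma$ above $x$ with the subtrees coming from the context $\Delta$ and $\Pi$ supplies the corresponding labelled succedent formulas. I apply $\lsc$'s $\imp R$ bottom-up, introducing a fresh $y$ adjacent to $x$; this reduces to showing a sequent with $y : A$ and $y : B$ over a tree with the arc $(x, y)$. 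The passage from the IH to this sequent uses $\nodesplit D$ to split off the fresh $y$ (licensed by $\upset{G_\Gamma}{x}$), followed by $\weak L/R$ and further $\nodesplit$ steps to graft on the subtrees and labelled formulas demanded by $\Delta$. The $\sub L$ case is symmetric, using $\nodesplit U$.

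For the nesting rules, the $\nsc$ premise places the contents of a nested sequent $\Gamma_0 \seq \Delta_0$ at the top level (hence at label $x$ under the translation), whereas the conclusion moves them into a nested sequent, so that in the translation they live at a fresh $y$ adjacent to $x$ (below $x$ for $\nest L$, above $x$ for $\nest R$). The argument is analogous to the $\imp R$ and $\sub L$ cases: use $\nodesplit$ to introduce $y$ and migrate the $\Gamma_0, \Delta_0$ data there, then weaken in the additional data supplied by the outer context. For the unnesting rules the situation is inverted---a nested sequent is absorbed back into the top level---and under the translation this corresponds to merging the adjacent label into $x$, which is realised directly by $\nodemerge D$ for $\unnest L$ and $\nodemerge U$ for $\unnest R$.

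The main obstacle will be bookkeeping: precisely matching the shape of the translated sequent at each step to the patterns demanded by $\nodesplit$ and $\nodemerge$, in particular their $\downset{G}{x}$ and $\upset{G}{x}$ side conditions, and by the freshness requirements of $\imp R$ and $\sub L$. A helpful observation is that the definition of $\ntol{-}{x}$ is commutative in the order of processing antecedent and succedent entries, so the tree can be built up in a way that positions the node to be split or merged in exactly the configuration required by the structural rule one wishes to apply next.
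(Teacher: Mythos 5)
Your overall architecture --- induction on the $\nsc$ derivation, direct simulation of the top-level rules at the label $x$, $\nodesplit$ for the rules that create a fresh adjacent label, $\nodemerge D$/$\nodemerge U$ for the $\unnest$ rules, and the commutativity of $\ntol{-}{x}$ for the bookkeeping --- coincides with the paper's. But there is a genuine gap in the cases $\imp R$, $\sub L$, $\nest L$ and $\nest R$: you never explain how the formulas that the induction hypothesis places at a \emph{single} top-level label get distributed over the two labels $x$ and $y$ of the target, and the tools you list ($\nodesplit D$, $\weak L/R$, further $\nodesplit$ steps) cannot do this. $\nodesplit$ only performs surgery on the label tree and leaves every labelled formula at its old label, and weakening only adds or discards formulas; neither relabels anything. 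Concretely, in your $\imp R$ case you take the IH at $x$, so $A$ and $B$ sit at $x$, whereas after applying $\lsc$'s $\imp R$ bottom-up you must exhibit $y:A$ and $y:B$ with $xGy$; the step that replaces $x:A$ by $y:A$ in an antecedent (premise at the lower node, conclusion at the upper one) is not validity-preserving --- truth only propagates from $x$ up to $y$, so $x:A \seq x:A$ is valid while $y:A \seq x:A$ is not --- and hence is not admissible in cut-free $\lsc$. The same problem recurs when you propose to ``migrate the $\Gamma_0, \Delta_0$ data'' up to the fresh node in the $\nest R$ case: the antecedent part $\Gamma_0$ cannot be pushed upward.

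The missing ingredient is the pair of monotonicity rules $\monot L$/$\monot R$, which you never invoke. The paper's construction instantiates the induction hypothesis at the \emph{fresh} label $y$ (legitimate, since the statement is quantified over all labels), so that the active material ($A$ and $B$, resp.\ the contents of $\Gamma_0 \seq \Delta_0$) is already correctly placed; it is then the surrounding context $\Gamma$ whose top-level formulas sit at the wrong label, and these are moved from $y$ \emph{down} to $x$ by weakening in the $x$-labelled copies and discharging the $y$-labelled ones with $\monot L$ (dually $\monot R$ for $\sub L$ and $\nest L$). This is the sound direction of transfer, and it is also where the observation $x \notin \dom{\Pi_d}$ (no $x$-labelled succedent formulas arise from an antecedent-only context) is needed so that a single monotonicity rule suffices per case. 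Your $\unnest$ cases are fine as stated, because $\nodemerge$, unlike $\nodesplit$, does relabel the formulas of the merged node.
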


\begin{proof}
  By induction on the given derivation. We look at the following
  cases.

Case $\nest R$: The given derivation is of the form
\[
\infer[\nest R]{\Gamma \seq (\Gamma_0\seq\Delta_0), \Delta}{
  \infer*[\pi]{\Gamma, \Gamma_0 \seq \Delta_0}{
  }
}
\]

We can produce this derivation of the translated sequent:
\[
\infer=[(\weak L/R)^*]{\Lambda_d, \Lambda_u, \Lambda_0 \lseq{G_d \join x G_u   \join x (x, y) \join y G_0} \Pi_0, \Pi_d, \Pi_u}{
  \infer=[(\nodesplit U/D)^*]{\Lambda_d, \Lambda_0 \lseq{G_d \join x G_u    \join x (x, y) \join y G_0} \Pi_0, \Pi_d}{
    \infer=[(\monot L)^*]{\Lambda_d, \Lambda_0 \lseq{G_d \join x (x, y) \join y G_0} \Pi_0, \Pi_d}{
      \infer=[(\weak L)^*]{\Lambda_d[x,y/x], \Lambda_0  \lseq{G_d \join x (x, y) \join y G_0} \Pi_0, \Pi_d}{
        \infer[\nodesplit D]{\Lambda_d[y/x], \Lambda_0 \lseq{G_d \join x (x, y) \join y G_0} \Pi_0, \Pi_d[y/x] }{
          \infer*[(\IH{\pi}){[y/x]}]{\Lambda_d[y/x], \Lambda_0 \lseq{G_d[y/x] \join y G_0} \Pi_0, \Pi_d[y/x]}{
          }
        }
      }
    }
  }
}
\]
where $\Lambda_d \lseq{G_d} \Pi_d = \ntol{\Gamma \seq }{x}$,
$\Lambda_u \lseq{G_u} \Pi_u = \ntol{\seq \Delta}{x}$ and $\Lambda_0
\lseq{G_0} \Pi_0 = \ntol{\Gamma_0 \seq \Delta_0}{y}$, and
$\Lambda_d[x,y/x]$ stands for the union of $\Lambda_d[y/x]$ with the
context formed by the $x$-labelled formulas of $\Lambda_d$. Notice
that $x \notin \dom{\Pi_d}$, which tells us that $\Pi_d[y/x] =
\Pi_d$. The side condition of the topmost application of $\nodesplit
D$ is met because $\upset{G_d}{x}$. Note also that particular cases
of $\nodesplit U/D$ allow the addition of new nodes to a label tree.

Case $\unnest L$: We are given a derivation in the form
\[
\infer[\unnest L]{\Gamma,\Gamma_0\seq\Delta_0, \Delta}{
  \infer*[\pi]{\Gamma, (\Gamma_0\seq\Delta_0) \seq \Delta}{
  }
}
\]

We make the derivation
\[
\infer[\nodemerge D]{\Lambda, \Lambda_0 \lseq{G_0 \join x G} \Pi_0, \Pi}{
  \infer*[\IH{\pi}]{\Lambda, \Lambda_0[y/x] \lseq{G_0[y/x] \join y (y,x) \join x G} \Pi_0[y/x], \Pi}{
  }
}
\]
where $\Lambda \lseq{G} \Pi = \ntol{\Gamma \seq \Delta}{x}$ and
$\Lambda_0 \lseq{G_0} \Pi_0 = \ntol{\Gamma_0 \seq \Delta_0}{x}$.

Case $\imp R$: The given derivation is of the form
\[
\infer[\imp R]{\Gamma \seq A \imp B, \Delta}{
  \infer*[\pi]{\Gamma, A \seq B}{
  }
}
\]

We transform it to
\[
\infer=[(\weak L/R)^*]{\Lambda_d, \Lambda_u \lseq{G_d \join{x} G_u} x : A \imp B, \Pi_d, \Pi_u}{
  \infer=[(\nodesplit U/D)^*]{\Lambda_d \lseq{G_d \join{x} G_u} x : A \imp B, \Pi_d}{
     \infer[\imp R]{\Lambda_d \lseq{G_d} x : A \imp B, \Pi_d}{
      \infer=[(\monot L)^*]{\Lambda_d, y : A \lseq{G_d \join x (x, y)} y : B, \Pi_d }{
        \infer=[(\weak L)^*]{\Lambda_d[x,y/x], y : A \lseq{G_d \join x (x, y)} y : B, \Pi_d }{
          \infer[\nodesplit D]{\Lambda_d[y/x], y : A \lseq{G_d \join x (x, y)} y : B, \Pi_d[y/x]}{
            \infer*[(\IH{\pi}){[y/x]}]{\Lambda_d[y/x], y : A \lseq{G_d[y/x]} y : B, \Pi_d[y/x]}{
            }
          }
        }
      }
    }
  }
}
\]
where $\Lambda_d \lseq{G_d} \Pi_d = \ntol{\Gamma \seq }{x}$ and
$\Lambda_u \lseq{G_u} \Pi_u = \ntol{\seq \Delta}{x}$.  Notice that
$x \notin \dom{\Pi_d}$, with the effect that $\Pi_d[y/x] = \Pi_d$.
The side condition on the topmost application of $\nodesplit D$ is
satisfied as $\upset{G_d}{x}$.
\end{proof}

\subsection{From $\dsc$ into $\lsc$ and back}
\label{sec:dsc2lsc}

The translation of $\dsc$ into $\lsc$ is not demanding.  Essentially,
it suffices to annotate the end sequent with the sole label of a
singleton label tree and follow the structure of the $\dsc$-derivation
bottom-up, introducing new labels at $\imp R$ and $\sub L$. But again
(like in the translation from $\dsc$ to $\nsc$), a cut in $\dsc$ is
rendered by a cut in $\lsc$, which is not so perfect, since we should
not need cut in $\lsc$ derivations.

When we wrote the translation, we did not think of it like this, but
it can be described as composition of the translations from $\dsc$ to
$\nsc$ and from $\nsc$ further on to $\lsc$. Since $\dsc$ sequents
yield no nesting in $\nsc$, the readdressing that is needed in
translating derivations has only to do with the $\imp R$ and $\sub L$
rules.

Given a standard context $\Gamma$, we write $x : \Gamma$ for the
labelled context obtained by labelling all formulas of $\Gamma$ with
$x$.

\begin{theorem}
  If $\Gamma \seq \Delta$ is derivable in $\dsc$, then $x : \Gamma \lseq{\sglt{x}}
  x : \Delta$ is derivable in $\lsc$.
\end{theorem}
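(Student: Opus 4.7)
The plan is to proceed by structural induction on the given $\dsc$ derivation, producing for each rule a short $\lsc$ derivation of the translated end sequent $x : \Gamma \lseq{\sglt{x}} x : \Delta$. Equivalently, the same translation arises by composing the trivial identity embedding of $\dsc$ into $\nsc$ noted at the start of Section~\ref{sec:nsc2dsc} (every $\dsc$ sequent is an $\nsc$ sequent with no nesting) with the translation of Section~\ref{sec:nsc2lsc}: on sequents with no nested components, $\ntol{\Gamma \seq \Delta}{x}$ unfolds to exactly $x : \Gamma \lseq{\sglt{x}} x : \Delta$, and because nothing is nested, the readdressing lemma is never invoked along the way.

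The routine cases are $\hyp$, $\weak L/R$, $\contr L/R$, $\top L/R$, $\bot L/R$, $\con L/R$, $\dis L/R$, $\imp L$ and $\sub R$. Each is syntactically a rule of $\lsc$ once every formula is labelled with $x$ and the label tree is kept as $\sglt{x}$, so the IH derivations slot into the corresponding $\lsc$ rule directly.

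The only non-routine cases are $\imp R$ and $\sub L$, the two rules of $\dsc$ whose $\lsc$ counterparts reach to a neighbouring label. For $\imp R$, the IH supplies a derivation of $x : \Gamma, x : A \lseq{\sglt{x}} x : B$. We would pick a fresh $y$, apply $\nodesplit D$ to grow the label tree from $\sglt{x}$ to $(x, y)$ (renaming the old root to $y$), then use $\monot L$ and $\weak L$ to reinstate the $x$-copies of the formulas of $\Gamma$ and absorb leftover duplicates, yielding the premise $x : \Gamma, y : A \lseq{(x,y)} y : B$ required by $\lsc$'s $\imp R$; after applying $\imp R$ we obtain $x : \Gamma \lseq{\sglt{x}} x : A \imp B$, and a final $\weak R$ restores the discarded context $x : \Delta$. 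The case $\sub L$ is dual, using $\nodesplit U$ and $\monot R$. These subderivations reproduce precisely the pattern of the $\imp R$ case of the $\nsc \to \lsc$ proof, which is consistent with the composition view.

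The only blemish is the one already flagged in the section prologue: a $\cut$ in the source $\dsc$ derivation translates to a $\cut$ in $\lsc$, so the genuine target of this translation is $\lsccut$ rather than the cut-free fragment of $\lsc$. Consequently the theorem is not a cut elimination result for $\dsc$; cut elimination is obtained only by first passing through $\nsc$ and then its own cut elimination.
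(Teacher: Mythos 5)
Your proposal is correct and follows essentially the same route as the paper: induction on the $\dsc$ derivation, with all rules except $\imp R$ and $\sub L$ mapping directly to their $\lsc$ counterparts under uniform labelling with $x$, and the $\imp R$ case handled by exactly the paper's combination of $\nodesplit D$, $\weak L$, $\monot L$, the labelled $\imp R$ and a final $\weak R$ (with $\sub L$ dual). Your remarks on the composition through $\nsc$ and on cuts being translated to cuts also match the paper's own discussion.
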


\begin{proof}
  By induction on the derivation of $\Gamma \seq \Delta$ in $\dsc$.
  We show one case.

Case $\imp R$: The given derivation
\[
\infer[\imp R]{\Gamma \seq A \imp B, \Delta}{
  \infer*[\pi]{\Gamma, A \seq B}{
  }
}
\]
is matched with the derivation
\[
\infer=[(\weak R)^*]{x:\Gamma \lseq{\sglt{x}} x : A \imp B, x:\Delta}{
\infer[\imp R]{x:\Gamma \lseq{\sglt{x}} x : A \imp B}{
  \infer[\monot L]{x:\Gamma,y: A \lseq{(x,y)} y: B}{
    \infer[\weak L]{x:\Gamma,y:\Gamma,y: A \lseq{(x,y)} y: B} {
       \infer[\nodesplit D]{y:\Gamma,y: A \lseq{(x,y)} y: B} {
         \infer*[\textrm{IH on $\pi$}]{y:\Gamma,y: A \lseq{\sglt{y}} y: B}{
         }
       }
     }
   }
}
}
\]
\end{proof}

The translation from $\lsc$ to $\dsc$ is best found by as a compound
translation through $\nsc$. We omit the details here, but it is quite
instructive. In particular, it gives a kind of explanation of why it
is so difficult to translate labelled derivations into standard
derivations in the case of $\Int$. We learn that the natural way uses
exclusion, and this is not available in $\Int$.


\section{Applications of the translations}
\label{sec:appl}

By analysing the targets of the various translations, one can find
some immediate applications. Our analysis essentially focuses on how
much cuts are needed in the translations, thus finding complete
classes of cuts. A direct use of the translations, not explored
here, is as a means of mapping proofs found by known search
procedures for $\BiInt$, based on the nested calculus and on the
labelled calculus, back into standard-style sequent calculus.

\subsubsection*{Translation from $\nsc$ into $\dsc$}

In this translation, the cut rule of $\dsc$ is used only for the
translation of the cut rule of $\nsc$ and of the unnesting-rules.
Let us call \emph{unnest cuts} the cuts of $\dsc$ of one of the
following two forms:

\[
\begin{array}{c}
\infer[\mathit{unnestcut} L]{\Gamma,\Gamma_0 \seq\Delta_0,\Delta} {
  \Gamma,\Gamma_0\seq\bigwedge\Gamma_0\sub\bigvee\Delta_0,\Delta_0,\Delta
  &
  \Gamma,\Gamma_0,\bigwedge\Gamma_0\sub\bigvee\Delta_0 \seq \Delta_0,\Delta
} \\[2ex]
 \infer[\mathit{unnestcut} R]{\Gamma,\Gamma_0
\seq\Delta,\Delta_0} {
  \Gamma,\Gamma_0 \seq \bigwedge\Gamma_0\imp\bigvee\Delta_0,\Delta_0,\Delta
  &
  \Gamma,\Gamma_0, \bigwedge\Gamma_0\imp\bigvee\Delta_0 \seq\Delta_0,\Delta
}
 \end{array}
\]
Observe that these two special cases of cut are the ones used in the
translations of the $\unnest$ rules. As $\nsc$ is complete without
cut, we have that unnest cuts are complete for $\dsc$.

\begin{proposition}
  The sequent calculus obtained from $\dsc$ by restricting to unnest
  cuts is complete for $\BiInt$.
\end{proposition}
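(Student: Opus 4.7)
The plan is to derive the proposition as a direct consequence of the cut-elimination theorem for $\nsc$ (established by Gor\'{e}, Postniece and Tiu~\cite{GorePostnieceTiu}, as recalled in Sec.~\ref{sec:nsc}) combined with the translation from $\nsc$ to $\dsc$ of Sec.~\ref{sec:nsc2dsc}.

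Suppose $\Gamma \seq \Delta$ is valid in $\BiInt$. Viewing it as a nested sequent with no genuine nesting, we have $\unL{\Gamma} = \Gamma$ and $\unR{\Delta} = \Delta$, so by soundness and completeness of $\nsc$ and the cut-elimination theorem of~\cite{GorePostnieceTiu}, there is a \emph{cut-free} derivation of $\Gamma \seq \Delta$ in $\nsc$. Applying the translation from $\nsc$ to $\dsc$ (Theorem in Sec.~\ref{sec:nsc2dsc}) to this derivation yields a derivation of $\Gamma \seq \Delta$ in $\dsc$.

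Now I would inspect where the translation introduces applications of the $\cut$ rule of $\dsc$. Going through the case analysis in the proof of that theorem, one sees that the $\cut$ rule is used in exactly two situations: when translating the $\cut$ rule of $\nsc$, and when translating the $\unnest L$ and $\unnest R$ rules. Since the $\nsc$ derivation we are translating is cut-free, the first situation does not arise. For $\unnest L$, the exhibited derivation uses a cut on the formula $\unLf{\Gamma_0} \sub \unRf{\Delta_0}$, which (reading $\bigwedge \Gamma_0$ and $\bigvee \Delta_0$ in the proposition as the iterated $\con$/$\dis$ formulas built by $\unLf{(-)}$ and $\unRf{(-)}$) is precisely an application of $\mathit{unnestcut} L$. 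Dually, the translation of $\unnest R$ produces exactly an $\mathit{unnestcut} R$. Hence every cut in the resulting $\dsc$ derivation is an unnest cut, and the sequent $\Gamma \seq \Delta$ is derivable in the restricted calculus.

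The only point requiring care is the identification of the cut formula: the proposition states the unnest cuts schematically with $\bigwedge \Gamma_0$ and $\bigvee \Delta_0$, while the translation produces specific right-associated iterated conjunctions and disjunctions (with $\top$/$\bot$ at the base) as given by $\unLf{(-)}$ and $\unRf{(-)}$. Taking these iterated forms as the meaning of $\bigwedge$ and $\bigvee$, the two coincide; otherwise one can insert routine $\con L$, $\con R$, $\dis L$, $\dis R$ and $\top L$, $\bot R$ steps to reshape the cut formula, which does not introduce any additional cuts.
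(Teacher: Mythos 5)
Your proposal is correct and follows essentially the same route as the paper: the paper likewise obtains the proposition by combining cut-freeness of $\nsc$ with the observation that the translation of Sec.~\ref{sec:nsc2dsc} introduces cuts only when translating $\cut$ and the $\unnest$ rules, and that the latter are exactly the unnest cuts. Your extra remark about identifying $\bigwedge\Gamma_0$, $\bigvee\Delta_0$ with the iterated forms $\unLf{\Gamma_0}$, $\unRf{\Delta_0}$ is a reasonable precision the paper leaves implicit.
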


Now of course the first premise of $unnestcut L$ and the second
premise of $unnestcut R$ are derivable, so a more practical idea would
be to remove cuts altogether and instead make the rules

\[
\infer[\unnest L]{\Gamma,\Gamma_0 \seq\Delta_0,\Delta} {
  \Gamma,\Gamma_0, \bigwedge\Gamma_0\sub\bigvee\Delta_0 \seq
  \Delta_0,\Delta
} \qquad \infer[\unnest R]{\Gamma,\Gamma_0 \seq\Delta_0,\Delta} {
  \Gamma,\Gamma_0 \seq
  \bigwedge\Gamma_0\imp\bigvee\Delta_0,\Delta_0,\Delta
}
\]

\subsubsection*{From $\nsc$ to $\dsc$ and back and then there again}
The attempt at a direct cut elimination transformation for $\dsc$
fails because of three combinations of cuts with other rules
\cite{Monteiro}. They fall into two wider combinations:

\[
\begin{array}{c}
\infer[\cut]{\Gamma \seq C \imp D,  \Delta}{
  \Gamma \seq A, C \imp D, \Delta
  &
  \infer[\imp R]{\Gamma, A \seq C \imp D, \Delta}{
    \Gamma, A, C \seq D
  }
} \quad\quad \infer[\cut]{\Gamma, C \sub D \seq \Delta}{
  \infer[\sub L]{\Gamma, C \sub D \seq A, \Delta}{
    C \seq D, A, \Delta
  }
  &
  \Gamma, C \sub D ,A \seq \Delta
}
\end{array}
\]

The $\unnest L$ and $\unnest R$ rules give a possibility to permute
the cuts up past the $\imp R$ and $\sub L$ inferences in these two
configurations.

We show this for the $\imp R$ case:
\[
  \infer[\unnest L]{\Gamma \seq C \imp D, \Delta}{
    \infer[\imp R]{\Gamma, (\bigwedge \Gamma \sub \bigvee (\Delta,C \imp D)) \seq C \imp D,\Delta}{
      \infer[\cut]{\Gamma, (\bigwedge \Gamma \sub \bigvee (\Delta,C \imp D)), C \seq D}{
         \infer[\sub L]{\Gamma,\bigwedge \Gamma \sub \bigvee (\Delta,C \imp D), C \seq A, D}{
           \infer=[(\con L, \dis R)^*]{\bigwedge \Gamma \seq \bigvee (\Delta,C \imp D), A, D}{
             \infer[\weak R]{\Gamma \seq \Delta, C \imp D, A, D}{\Gamma \seq \Delta, C \imp D, A}
           }
         }
         &
         \infer[\weak L]{\Gamma,(\bigwedge \Gamma \sub \bigvee (\Delta,C \imp D)), C, A \seq
         D}{\Gamma, C, A \seq
         D}
      }
    }
  }
\]
The possibility of this permutation may give a transformation
replacing the cuts in a $\dsc$ derivation with $\unnest L$ and
$\unnest R$, but we have not checked if it is welldefined, i.e.,
terminates.

\subsubsection*{Translations between $\lsc$ into $\nsc$}
In these translations, cuts in the target are only needed for
translating cuts in the source. Thus, redundancy of cut in one
implies redundancy of cut in the other.


\section{Final remarks}

Our translations between standard-style, nested and labelled sequent
calculi for $\BiInt$ provide a framework for comparison of proof
transformations within each of these systems.  A basic question is
to understand the relationship between ways of performing
cut-elimination in the three systems: (i) for the nested system,
Gor\'e et al.~\cite{GorePostnieceTiu} have described a
cut-elimination procedure; (ii) for the labelled system, one should
be able to adapt Negri's procedure \cite{Negri}, which applies to a
wide range of normal modal logics; (iii) for the standard-style
system, cuts are not fully eliminable, but unnest cuts are a
complete and simple form of cuts.

A tool that should be helpful to perform comparison of
cut-elimination processes is term assignment. We are not aware of
term assignment done directly for systems based on nested sequents.
This kind of formalisms have been used mostly in connection with
proof search and modal logics, exploiting the subformula property
\cite{BrunnlerHabilitation}, but for $\BiInt$, besides the study of
shallow inference and nested sequents of Gor\'e et
al.~\cite{GorePostnieceTiu}, a study of deep inference and nested
sequents (though still with emphasis on proof search) is also
available \cite{Postniece-deep-inference}.  As to labelled systems,
Reed and Pfenning \cite{ReedPfenning} consider term assignment in
the context of labelled intuitionistic logic.  They work with
natural deduction and use control operators \emph{letcc} and
\emph{throw} to account for the multiple conclusions of the labelled
sequent calculus that they proceed from. A term assignment for
$\BiInt$ corresponding to Dragalin's style can be obtained from the
well-studied calculus $\overline{\lambda}{\mu\tilde\mu}$ with the
typing system $LK_{\mu\tilde\mu}$ of Curien-Herbelin
\cite{CurienHerbelin}.  From these systems, which are for classical
logic (or for classical logic with exclusion \cite{CurienHerbelin}),
we obtain the $\imp,\sub$-fragment of $\BiInt$ by imposing the usual
single formula restriction in the succedent (resp.\ antecedent) of
the rule corresponding to $\imp R$ (resp.\ $\sub L$), i.e.,
\[
\infer[\imp R] {\Gamma\vdash \lambda x.v:A\imp
B|\Delta}{\Gamma,x:A\vdash v:B|}\qquad \infer[\sub L]
{\Gamma|\beta\lambda.e:B\sub A\vdash\Delta}{|e:B\vdash
\beta:A,\Delta}
\]

The exclusion operation has been given computational meaning by,
e.g., Crolard \cite{CrolardJLC} and Ariola et
al.~\cite{Ariola-et-al}. Crolard considered multiple-conclusion
natural deduction systems both for classical logic with exclusion
and for $\BiInt$, the latter being obtained from the former by a
mechanism to keep track of dependencies between hypotheses and
conclusions and ensure the $\BiInt$-restrictions, arriving at a
\emph{safe $\lambda$-calculus}, where coroutines (a restricted form
of continuations) become first-class objects. Ariola et al.\
considered classical logic with exclusion in a natural deduction
system close to that of Crolard's, to provide a typing system for a
$\lambda$-calculus with delimited continuations. It would be
interesting to revisit these ideas in connection to the sequent
calculi studied in this paper and understand whether the sequent
calculus format (standard or extended) has anything new to offer. As
Crolard's mechanism to keep track of dependencies resembles our
labelled system for $\BiInt$, a specific goal would be to
investigate relationships between the two systems.


\paragraph{Acknowledgments} We are grateful to our anonymous referees
for their helpful comments.  We are also grateful to Linda Postniece
for discussions. We were supported by the Portuguese Foundation for
Science and Technology through Centro de Matem\'{a}tica da
Universidade do Minho and project RESCUE no.\ PTDC/EIA/65862/2006,
ERDF through the Estonian Centre of Excellence in Computer Science,
EXCS, and the Estonian Science Foundation under grant no.~6940.



\begin{thebibliography}{10}
\providecommand{\bibitemstart}[1]{\bibitem{#1}}
\providecommand{\bibitemend}{}
\providecommand{\bibliographystart}{}
\providecommand{\bibliographyend}{}
\providecommand{\url}[1]{\texttt{#1}}
\providecommand{\urlprefix}{Available at }
\providecommand{\bibinfo}[2]{#2}
\bibliographystart

\bibitemstart{Ariola-et-al}
\bibinfo{author}{Z.~M. Ariola}, \bibinfo{author}{H.~Herbelin} \&
  \bibinfo{author}{A.~Sabry} (\bibinfo{year}{2009}): \emph{\bibinfo{title}{A
  type-theoretic foundation of delimited continuations}}.
\newblock {\sl \bibinfo{journal}{Higher-Order and Symbolic Computation}}
  \bibinfo{volume}{22}(\bibinfo{number}{3}), pp. \bibinfo{pages}{233--273}.
\newblock \urlprefix\url{http://dx.doi.org/10.1007/s10990-007-9006-0}.
\bibitemend

\bibitemstart{BrunnlerHabilitation}
\bibinfo{author}{K.~Br{\"u}nnler} (\bibinfo{year}{2010}).
\newblock \emph{\bibinfo{title}{Nested Sequents}}.
\newblock \bibinfo{howpublished}{Habilitation thesis}.
\newblock \urlprefix\url{http://arxiv.org/abs/1004.1845}.
\bibitemend

\bibitemstart{BuismanGoreTableaux07}
\bibinfo{author}{L.~{Buisman (Postniece)}} \& \bibinfo{author}{R.~Gor{\'e}}
  (\bibinfo{year}{2007}): \emph{\bibinfo{title}{A cut-free sequent calculus for
  bi-intuitionistic logic}}.
\newblock In: \bibinfo{editor}{N.~Olivetti}, editor: {\sl
  \bibinfo{booktitle}{Proc.\ of 16th Int.\ Conf.\ on Automated Reasoning with
  Analytic Tableaux and Related Methods, TABLEAUX 2007 (Aix en Provence, July
  2007)}}, {\sl \bibinfo{series}{Lect.\ Notes in Comput.\ Sci.}}
  \bibinfo{volume}{4548}, \bibinfo{publisher}{Springer}, pp.
  \bibinfo{pages}{90--106}.
\newblock \urlprefix\url{http://dx.doi.org/10.1007/978-3-540-73099-6_9}.
\bibitemend

\bibitemstart{Crolard}
\bibinfo{author}{T.~Crolard} (\bibinfo{year}{2001}):
  \emph{\bibinfo{title}{Subtractive logic}}.
\newblock {\sl \bibinfo{journal}{Theor.\ Comput.\ Sci.}}
  \bibinfo{volume}{254}(\bibinfo{number}{1--2}), pp. \bibinfo{pages}{151--185}.
\newblock \urlprefix\url{http://dx.doi.org/10.1016/S0304-3975(99)00124-3}.
\bibitemend

\bibitemstart{CrolardJLC}
\bibinfo{author}{T.~Crolard} (\bibinfo{year}{2004}): \emph{\bibinfo{title}{A
  formulae-as-types interpretation of subtractive logic}}.
\newblock {\sl \bibinfo{journal}{J.\ of Log.\ and Comput.}}
  \bibinfo{volume}{14}(\bibinfo{number}{4}), pp. \bibinfo{pages}{529--570}.
\newblock \urlprefix\url{http://dx.doi.org/10.1093/logcom/14.4.529}.
\bibitemend

\bibitemstart{CurienHerbelin}
\bibinfo{author}{P.-L. Curien} \& \bibinfo{author}{H.~Herbelin}
  (\bibinfo{year}{2000}): \emph{\bibinfo{title}{The duality of computation}}.
\newblock In: {\sl \bibinfo{booktitle}{Proc.\ of 5th Int.\ Conf.\ on Functional
  Programming, ICFP~'00 (Montreal, Sept.\ 2000)}}, \bibinfo{publisher}{ACM
  Press}, pp. \bibinfo{pages}{233--243}.
\newblock \urlprefix\url{http://doi.acm.org/10.1145/351240.351262}.
\bibitemend

\bibitemstart{GorePostnieceTiu}
\bibinfo{author}{R.~Gor{\'e}}, \bibinfo{author}{L.~Postniece} \&
  \bibinfo{author}{A.~Tiu} (\bibinfo{year}{2008}):
  \emph{\bibinfo{title}{Cut-elimination and proof-search for bi-intuitionistic
  logic using nested sequents}}.
\newblock In: \bibinfo{editor}{C.~Areces} \& \bibinfo{editor}{R.~Goldblatt},
  editors: {\sl \bibinfo{booktitle}{Advances in Modal Logic}},
  ~\bibinfo{volume}{7}, \bibinfo{publisher}{College Publications}, pp.
  \bibinfo{pages}{43--66}.
\newblock
  \urlprefix\url{http://www.aiml.net/volumes/volume7/Gore-Postniece-Tiu.pdf}.
\bibitemend

\bibitemstart{Lukowski}
\bibinfo{author}{P.~{\L}ukowski} (\bibinfo{year}{1996}):
  \emph{\bibinfo{title}{Modal interpretation of Heyting-Brouwer logic}}.
\newblock {\sl \bibinfo{journal}{Bull.\ of Sect. of Logic}}
  \bibinfo{volume}{25}(\bibinfo{number}{2}), pp. \bibinfo{pages}{80--83}.
\newblock
  \urlprefix\url{http://www.filozof.uni.lodz.pl/bulletin/pdf/25_2_3.pdf}.
\bibitemend

\bibitemstart{Monteiro}
\bibinfo{author}{C.~Monteiro} (\bibinfo{year}{2006}):
  \emph{\bibinfo{title}{Caracteriza\c c\~oes sem\^anticas e dedutivas da
  l\'ogica bi-intuicionista}}.
\newblock \bibinfo{type}{Master's thesis}, \bibinfo{school}{Universidade de
  Tr\'as-os-Montes e Alto-Douro}.
\bibitemend

\bibitemstart{Negri}
\bibinfo{author}{S.~Negri} (\bibinfo{year}{2005}): \emph{\bibinfo{title}{Proof
  analysis in modal logic}}.
\newblock {\sl \bibinfo{journal}{J.\ of Philos.\ Logic}}
  \bibinfo{volume}{34}(\bibinfo{number}{5--6}), pp. \bibinfo{pages}{507--544}.
\newblock \urlprefix\url{http://dx.doi.org/10.1007/s10992-005-2267-3}.
\bibitemend

\bibitemstart{PintoUustalu}
\bibinfo{author}{L.~Pinto} \& \bibinfo{author}{T.~Uustalu}
  (\bibinfo{year}{2009}): \emph{\bibinfo{title}{Proof search and counter-model
  construction for bi-intuitionistic propositional logic with labelled
  sequents}}.
\newblock In: \bibinfo{editor}{M.~Giese} \& \bibinfo{editor}{A.~Waaler},
  editors: {\sl \bibinfo{booktitle}{Proc.\ of 18th Int.\ Conf.\ on Automated
  Reasoning with Analytic Tableaux and Related Methods, TABLEAUX~2009 (Oslo,
  July 2009)}}, {\sl \bibinfo{series}{Lect.\ Notes in Artif.\ Intell.}}
  \bibinfo{volume}{5607}, \bibinfo{publisher}{Springer}, pp.
  \bibinfo{pages}{295--309}.
\newblock \urlprefix\url{http://dx.doi.org/10.1007/978-3-642-02716-1_22}.
\bibitemend

\bibitemstart{Postniece-deep-inference}
\bibinfo{author}{L.~Postniece} (\bibinfo{year}{2009}):
  \emph{\bibinfo{title}{Deep Inference in Bi-intuitionistic Logic}}.
\newblock In: \bibinfo{editor}{H.~Ono}, \bibinfo{editor}{M.~Kanazawa} \&
  \bibinfo{editor}{R.~Queiroz}, editors: {\sl \bibinfo{booktitle}{Proc.\ of
  16th Int.\ Wksh.\ on Logic, Language, Information and Computation, WoLLiC
  2009 (Tokyo, June 2009)}}, {\sl \bibinfo{series}{Lect.\ Notes in Artif.\
  Intell.}} \bibinfo{volume}{5514}, \bibinfo{publisher}{Springer}, pp.
  \bibinfo{pages}{320--334}.
\newblock \urlprefix\url{http://dx.doi.org/10.1007/978-3-642-02261-6_26}.
\bibitemend

\bibitemstart{Rauszer74}
\bibinfo{author}{C.~Rauszer} (\bibinfo{year}{1974}): \emph{\bibinfo{title}{A
  formalization of the propositional calculus of {H-B} logic}}.
\newblock {\sl \bibinfo{journal}{Studia Logica}}
  \bibinfo{volume}{33}(\bibinfo{number}{1}), pp. \bibinfo{pages}{23--34}.
\newblock \urlprefix\url{http://dx.doi.org/10.1007/bf02120864}.
\bibitemend

\bibitemstart{Rauszer74semiboolean}
\bibinfo{author}{C.~Rauszer} (\bibinfo{year}{1974}):
  \emph{\bibinfo{title}{Semi-boolean algebras and their applications to
  intuitionistic logic with dual operators}}.
\newblock {\sl \bibinfo{journal}{Fund. Math.}} \bibinfo{volume}{83}, pp.
  \bibinfo{pages}{219--249}.
\newblock \urlprefix\url{http://matwbn.icm.edu.pl/ksiazki/fm/fm83/fm83120.pdf}.
\bibitemend

\bibitemstart{Rauszer77}
\bibinfo{author}{C.~Rauszer} (\bibinfo{year}{1977}):
  \emph{\bibinfo{title}{Applications of {K}ripke models to {H}eyting-{B}rouwer
  logic}}.
\newblock {\sl \bibinfo{journal}{Studia Logica}}
  \bibinfo{volume}{36}(\bibinfo{number}{1--2}), pp. \bibinfo{pages}{61--71}.
\newblock \urlprefix\url{http://dx.doi.org/10.1007/bf02121115}.
\bibitemend

\bibitemstart{ReedPfenning}
\bibinfo{author}{J.~Reed} \& \bibinfo{author}{F.~Pfenning}
  (\bibinfo{year}{2009}): \emph{\bibinfo{title}{Intuitionistic letcc via
  labelled deduction}}.
\newblock {\sl \bibinfo{journal}{Electron. Notes in Theor. Comput. Sci.}}
  \bibinfo{volume}{231}, pp. \bibinfo{pages}{91--111}.
\newblock \urlprefix\url{http://dx.doi.org/10.1016/j.entcs.2009.02.031}.
\bibitemend

\bibitemstart{Restall}
\bibinfo{author}{G.~Restall} (\bibinfo{year}{1977}).
\newblock \emph{\bibinfo{title}{Extending intuitionistic logic with
  subtraction}}.
\newblock \bibinfo{howpublished}{Unpublished note}.
\newblock \urlprefix\url{http://consequently.org/writing/extendingj/}.
\bibitemend

\bibliographyend
\end{thebibliography}

\end{document}